\newif\ifconference
\newif\ifanonymous
\newif\ifheapspublished
\let\citet\textcite
\newtheorem{theorem}{Theorem}[section]
\newtheorem{lemma}[theorem]{Lemma}
\theoremstyle{definition}
\newtheorem*{theorem*}{Theorem}
\crefname{step}{Step}{Steps}
\crefname{algorithm}{Algorithm}{Algorithms}
\crefname{theorem}{Theorem}{Theorems}
\crefname{@theorem}{Theorem}{Theorems}
\crefname{proposition}{Proposition}{Propositions}
\crefname{observation}{Observation}{Observations}
\crefname{lemma}{Lemma}{Lemmas}
\crefname{claim}{Claim}{Claims}
\crefname{problem}{Problem}{Problems}
\crefname{conjecture}{Conjecture}{Conjectures}
\crefname{question}{Question}{Questions}
\crefname{example}{Example}{Examples}
\crefname{fact}{Fact}{Facts}    
\crefname{invariant}{Invariant}{Invariants}
\crefname{algocf}{Algorithm}{Algorithms} 
\Crefname{algocf}{Algorithm}{Algorithms}
\newcommand{\eps}{\epsilon}
\renewcommand{\lg}{\log}
\newcommand{\OPT}{\mathrm{OPT}}
\newcommand{\OO}{\mathrm{O}}
\newcommand{\el}{\ell}
\newcommand{\Insert}{\mbox{\it insert\/}}
\newcommand{\Findmin}{\mbox{\it find-min\/}}
\newcommand{\Deletemin}{\mbox{\it delete-min\/}}
\newcommand{\Decreasekey}{\mbox{\it decrease-key\/}}
\DeclareMathOperator*{\polylog}{poly\,log}
\newcommand{\citeauthornolink}[1]{{\NoHyper\citeauthor{#1}\endNoHyper}}
  \let\oldlog\log
  \def\log{\texorpdfstring{\log}{log }}
  \def\lg{\texorpdfstring{\lg}{log }}
  \let\log\oldlog
  \let\lg\oldlg
\author{
Bernhard Haeupler\thanks{INSAIT, Sofia University ``St.~Kliment Ohridski'' \& ETH Zurich, \texttt{bernhard.haeupler@insait.ai}}
\and
Richard Hladík\thanks{ETH Zurich, \texttt{ethz@rihl.cz}}
\and
John Iacono\thanks{Université libre de Bruxelles, \texttt{ulb@johniacono.com}}
\and
Václav Rozhoň\thanks{Charles University, \texttt{vaclavrozhon@gmail.com}}
\and
Robert E.~Tarjan\thanks{Princeton University, \texttt{ret@cs.princeton.edu}}
\and
Jakub Tětek\thanks{\texttt{j.tetek@gmail.com}}
}
\date{}
\def\ourauthors{Anonymous authors}
\def\ourauthors{Bernhard Haeupler, Richard Hladík, John Iacono, Václav Rozhoň, Robert E. Tarjan, Jakub Tětek}
\def\ourkeywords{sorting under partial information, partial orders, topological sort, working-set heaps}
\def\ourabstract{%
We consider the problem of sorting $n$ items
, given the outcomes of $m$ pre-existing comparisons.  We present a simple and natural deterministic algorithm that runs in $\OO(m+\log T)$ time and does $\OO(\log T)$ comparisons, where $T$ is the number of total orders consistent with the pre-existing comparisons. 
\texorpdfstring{\par\smallskip\par}{}%
Our running time and comparison bounds are best possible up to constant factors, thus resolving a problem that has been studied intensely since 1976 (Fredman, Theoretical Computer Science). The best previous 
algorithm with a bound of $\OO(\lg T)$ on the number of comparisons has a time bound of \texorpdfstring{$\OO(n^{2.5})$}{O(n\^{}2.5)} and is more complicated.
\texorpdfstring{\par\smallskip\par}{}%
Our algorithm combines three classic algorithms: topological sort, heapsort with the right kind of heap, and efficient search in a sorted list.  It outputs the items in sorted order one by one.  It can be modified to stop early, thereby solving the important and more general top-$k$ sorting problem: Given $k$ and the outcomes of some pre-existing comparisons, output the smallest $k$ items in sorted order.  The modified algorithm solves the top-$k$ sorting problem in minimum time and comparisons, to within constant factors. 
}
\begin{document}

\title{Fast and Simple Sorting Using Partial Information}

\def\ref#1{\textcolor{red}{[\textbackslash ref is disabled. Please use \textbackslash cref instead. Use \textbackslash cref\{sec:foo,sec:bar\} to reference two things at once.]}}

\maketitle

\begin{abstract}
\ourabstract

\textbf{Keywords:} \ourkeywords
\end{abstract}


\section{Introduction}
We consider the problem of sorting $n$ totally ordered items using a minimum number of binary comparisons, given the outcomes of $m$ pre-existing comparisons. This problem has been called \emph{sorting under partial information}~\cite{kahn-kim-supi-1992}, and it has been intensively studied since 1976~\cite{fredman-generalized-supi-1976}.

We present a simple and natural deterministic algorithm for this problem that runs in $\OO(m+\lg T)$ time and uses $\OO(\log T)$ comparisons, where $T$ is the number of total orders consistent with the pre-existing comparison outcomes. These bounds are optimal up to constant factors. The best previous algorithm with an $\OO(\log T)$ bound on comparisons requires $\OO(n^{2.5})$ time and is more complicated. An independent and concurrent work~\cite{vanderhoog2024tight} implies an $\OO(n^{\omega})$-time algorithm, where $\omega$ is the matrix multiplication exponent, the smallest number $\omega$ such that $n$ by $n$ matrices can be multiplied in $\OO(n^{\omega})$ time. Currently the lowest known bound for $\omega$ is $\omega \leq 2.371339$ \cite{matrix-multiplication-sota}. 

Our algorithm for this problem combines three classic algorithms in a natural way: topological sort, heapsort with the right kind of heap, and efficient searching in a sorted list. Unlike many previous algorithms, ours does not use any estimate of $T$ or any measure of entropy to determine the next comparison.  It uses two simple data structures: an array and a pairing heap, a simple kind of self-adjusting heap.

Our approach is closely related to a recent result of \citet{dijkstra-universal-optimality} proving that Dijkstra's algorithm with an appropriate heap is universally optimal for the task of sorting vertices by their distance from a source vertex.  Our problem places weaker constraints on the required heap efficiency, allowing us to use a pairing heap rather than the custom-designed heap that gives the Dijkstra result.

Our algorithm outputs the items in sorted order one-by-one.  It can be modified to efficiently solve the important and more general top-$k$ sorting problem: Given $k$ and the outcomes of some pre-existing comparisons, output the smallest $k$ items in sorted order.  The modified algorithm solves the top-$k$ sorting problem in minimum time and comparisons, to within constant factors.  In this more general problem we require more of the heap, but a variant of the pairing heap that we call the \emph{double pairing heap} suffices.  

Specifically, we prove the following two theorems:

\begin{theorem}
\label{thm:main_intro_1}
There is a simple and natural deterministic algorithm that, given the outcomes of 
$m$ pre-existing comparisons between pairs of $n$ totally ordered elements, outputs all the elements in sorted order in $\OO(m+\OPT)$ time using only $\OO(\OPT)$ additional comparisons.
\end{theorem}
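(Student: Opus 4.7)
The plan is to combine topological sort with heap-sort on a pairing heap. In preprocessing, build in $\OO(n+m)$ time the DAG $G$ whose edges record each pre-existing comparison $a<b$, compute each vertex's in-degree, and initialize a pairing heap $H$ containing the in-degree-zero vertices. The main loop then repeats: extract the minimum $x$ from $H$, output it, and for each out-neighbor $y$ of $x$ decrement $y$'s in-degree, inserting $y$ into $H$ the moment its in-degree reaches $0$. Terminate when $H$ is empty. Correctness is a straightforward induction: at the moment $x$ is popped, all predecessors of $x$ in $G$ have already been output, so the output is a linear extension of the partial order; moreover, any two items that coexist in $H$ are mutually incomparable in the partial order, so the comparisons performed inside the heap are legitimate and cannot contradict known information.

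Every edge of $G$ is scanned at most once and every vertex enters and leaves $H$ exactly once, so with pairing-heap inserts costing $\OO(1)$ amortized, the non-comparison bookkeeping is $\OO(n+m)$. The entire work of the theorem is therefore to bound the total comparison cost of the $n$ delete-min operations by $\OO(\OPT)=\OO(\log T)$; for a pairing heap this comparison cost also dominates the delete-min time, so a single bound delivers both claims.

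I would prove the comparison bound in two stages. For each delete-min of $x_i$, let $a_i$ denote the size of the antichain of items in $H$ that could still, consistently with everything known, be the next minimum. Stage one is a potential-function argument showing that the amortized comparison cost of the $i$-th delete-min is $\OO(1+\log a_i)$; this refines the worst-case $\OO(\log n)$ bound for pairing heaps by exploiting that in our setting the heap only ever compares items from an antichain and that the popped item is always the antichain's true minimum. Stage two is an entropy inequality $\sum_i \log a_i = \OO(\log T)$: picking the next item out of an antichain of size $a_i$ supplies roughly $\log a_i$ bits, while the total information content of a linear extension is $\log T$. Combining the two yields the advertised $\OO(m+\OPT)$ time and $\OO(\OPT)$ comparisons.

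The main obstacle is stage one. Worst-case analyses of pairing heaps are already delicate, and the instance-dependent $\log a_i$ bound requires an amortized argument tailored to the specific operation mix of heap-sort on a partial order rather than a generic workload. The introduction hints that the weaker constraints of this problem, compared with the universally optimal Dijkstra setting, allow the off-the-shelf pairing heap to suffice, so I would look for a potential function that charges the comparisons at a delete-min against the rank-reduction effected by the pairwise merges, while using the fact that, because only antichain elements coexist in $H$, the relevant working set is of size at most $a_i$.
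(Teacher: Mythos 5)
There is a genuine gap, and it is not repairable within your algorithm: the stage-two inequality $\sum_i \log a_i = \OO(\log T)$ is false. Take $G$ to be a single directed path $c_1 \to c_2 \to \cdots \to c_{n-1}$ together with one isolated vertex $x$. Here $T = n$ and $\OPT = \Theta(\log n)$ (binary-search $x$ into the chain), yet at every delete-min the heap contains exactly two candidates for the next minimum, so $a_i = 2$ for all $i$ and $\sum_i \log a_i = n-1 \gg \log T$. Your entropy argument runs in the wrong direction: the conditional entropies of the successive choices do sum to at most $\log T$, but each such entropy can be far smaller than $\log a_i$ when the choice is nearly forced, as it is here. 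Moreover this is not merely a weakness of the analysis: unmodified topological heapsort genuinely performs $\Omega(n)$ comparisons on this instance (each delete-min from a heap of size two costs at least one comparison), so no potential-function argument for stage one can make the plain algorithm meet the $\OO(\OPT)$ comparison bound. What your plan does establish is essentially the paper's intermediate result (its Theorem 3.1): $\OO(m+\log T)$ time and $\OO(n+\log T)$ comparisons, which the paper proves via Iacono's working-set bound for pairing heaps combined with the interval lemma $\sum_v \log w(v) \le \log T + n\log e$ rather than via a heap-size/antichain potential.

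The missing idea is an algorithmic modification that removes the additive $n$: the paper's ``topological heapsort with lookahead.'' It identifies the \emph{bottlenecks} (vertices that are alone on their level), keeps them out of the heap in an array $B$ ordered by level, and before deleting from the heap it flushes all bottlenecks smaller than $\Findmin(H)$ using an exponential-plus-binary search in $B$; marked bottlenecks control when $B$ is refilled. Two facts then close the argument: if there are $b$ bottlenecks then $\log T \ge (n-b)/2$, so all per-vertex $\OO(1)$ costs for non-bottlenecks are $\OO(\log T)$; and the search costs in $B$ are charged to $\log T$ via a product-of-gaps bound $T \ge \prod_v(|B(v)|+1)$. In the path-plus-$x$ example above, all of $c_2,\dots,c_{n-1}$ are bottlenecks, the heap only ever handles $x$ and $c_1$, and the exponential/binary searches reproduce the $\OO(\log n)$ behavior that your version cannot achieve. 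You would need to add this (or an equivalent mechanism, such as the longest-path/insertion-sort variant discussed in the paper's appendix) before the stated theorem follows.
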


\begin{theorem}\label{thm:main_intro_2}
There is a simple and natural deterministic algorithm that, given the outcomes of 
$m$ pre-existing comparisons between pairs of $n$ totally ordered elements and a parameter $k$, outputs the smallest $k$ elements in sorted order in $\OO(m+\OPT(k))$ time using only $\OO(\OPT(k))$ additional comparisons.
\end{theorem}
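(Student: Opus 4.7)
The plan is to mirror the argument for \cref{thm:main_intro_1}, substituting the double pairing heap for the ordinary pairing heap. Algorithmically, we build the DAG of the $m$ pre-existing comparisons in $O(n+m)$ time by topological processing, seed the heap with the DAG's sources, and then repeat the cycle of extracting the current minimum and inserting its newly enabled successors. Stopping after $k$ extract-mins yields the top $k$ items in sorted order. The only deviation from the algorithm of \cref{thm:main_intro_1} is the choice of heap, which is forced by the need to bound the comparison cost of an arbitrary prefix of operations rather than only of a complete sort.

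I would split the analysis into three parts. First, an information-theoretic argument establishes the lower bound $\OPT(k) = \Omega(\lg T_k)$, where $T_k$ denotes the number of distinct top-$k$ outputs consistent with the pre-existing comparisons: any deterministic comparison algorithm must distinguish all such outputs, so its decision tree has at least $T_k$ leaves. Second, an accounting argument shows that aside from the $O(n+m)$ setup, the non-comparison work is $O(1)$ per extracted element, $O(1)$ per DAG edge incident to an extracted vertex, and $O(1)$ per comparison; the first two sum to $O(m)$ and the third is dominated by the comparison bound. Third, and most crucially, a heap-theoretic lemma shows that the double pairing heap performs $O(\lg T_k)$ comparisons during the first $k$ extract-mins in this usage pattern.

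The main obstacle is the heap lemma. A standard pairing heap yields $O(\lg T)$ comparisons over a full sort, but its amortized potential can grow at intermediate steps, so truncating the sequence after $k$ extractions need not yield $O(\lg T_k)$ comparisons. The double pairing heap is designed to avoid exactly this deficiency. The proof would proceed by defining a potential function tailored to the double pairing heap's structure, verifying that it is $O(\lg T_k)$ at the moment the $k$-th smallest item has been extracted, and checking that the usual amortized bounds hold for insert, find-min, and extract-min against this potential. Once the heap lemma is established, combining it with the lower bound $\OPT(k) = \Omega(\lg T_k)$ and the setup and per-element accounting gives the claimed $O(m + \OPT(k))$ time and $O(\OPT(k))$ comparison bounds, completing the proof.
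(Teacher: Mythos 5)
There are two genuine gaps. First, the algorithm you actually describe --- seed the heap with the sources, repeatedly extract the minimum and insert newly enabled successors, stop after $k$ extractions --- is plain (truncated) topological heapsort, and with a strict-working-set heap this only gives $\OO(k+\OPT(k))$ comparisons, not $\OO(\OPT(k))$. The additive $k$ is real: the heap insertions alone cost $\Theta(k)$ comparisons, and $\OPT(k)$ can be far smaller (e.g.\ a long chain of ``bottleneck'' vertices interleaved with a few incomparable ones makes $\log T_k = o(k)$ while each insertion still links against a nonempty heap). The paper removes this term only by the lookahead mechanism: bottlenecks (vertices alone on their level) are kept out of the heap, handled through the array $B$ with marked bottlenecks and exponential/binary search, and the comparison cost of the searches is charged to $\log T$ via a separate counting argument. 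In the top-$k$ setting this requires additional work you do not mention at all: computing just the bottlenecks on the first levels containing $k$ vertices, arguing they all lie among the $k-1$ smallest vertices, and bounding the number of non-bottlenecks among the first $k$ by $\OO(\OPT(k))$ via the many-bottlenecks lemma. ``The only deviation is the choice of heap'' is therefore not enough.

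Second, your lower bound $\OPT(k)=\Omega(\log T_k)$ is too weak to pay for everything. The paper uses $\OPT(k) \geq \max\{\log T_k,\, |S_k|-1\}$, where $S_k$ is the set of sources remaining after the $k-1$ smallest vertices are deleted; the $|S_k|-1$ term is essential to absorb the $\Theta(|S_k|)$ insertion comparisons for sources that are inserted but never extracted. With $k=1$ and $n$ mutually incomparable items, $\log T_1 = \log n$ but any algorithm (and yours) must spend about $n$ comparisons, so a bound of the form ``the heap does $\OO(\log T_k)$ comparisons'' cannot hold and your accounting leaves these insertions uncovered. Finally, the heart of the matter --- that the truncated run's delete-min cost is governed by the strict (intermediate) working-set bound, reduced to a full run on the subgraph $G(k)$ induced by the $k$ smallest vertices with $T(G(k)) \le T_k$, together with a proof that double pairing heaps actually satisfy that bound --- is only gestured at (``define a potential function tailored to the double pairing heap''); this is precisely the nontrivial content of the paper's Sections on top-$k$ sorting and double pairing heaps, and it is not supplied by the sketch.
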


\paragraph{Roadmap}
In \Cref{S:preliminaries}, we give a graph-theoretic formulation of the sorting problems we consider, and we define and discuss heaps and their working sets.  A heap with a working-set bound is a key ingredient in our algorithms. 
In \Cref{S:prior-work} we discuss related work. This includes work on the sorting problems we consider and on other related sorting and ordering problems, work on sampling and counting topological orders, and work on heaps with a working-set bound. 
In particular, we discuss an alternative sorting algorithm developed by~\citet{vanderhoog2024} in recent follow-on work to the original version of our paper~\cite{haeupler-et-al2024}, version 1, April 6, 2024.  In the current version of our paper we have used the key lemma of~\citet{vanderhoog2024} to simplify our analysis. 

In \Cref{S:topological}, we present our basic algorithm, which we call \emph{topological heapsort}. We prove that it runs in $\OO(m+\lg T)$ time and does $\OO(n+\lg T)$ comparisons.  We eliminate the additive $n$ term in the number of comparisons in \Cref{S:topological-heapsort-lookahead} by adding an additional step to our algorithm, producing an algorithm that we call \emph{topological heapsort with lookahead}.  This algorithm is best possible to within constant factors in both its running time and its number of comparisons.  Our algorithm uses two simple data structures: an array and a pairing heap.  In \cref{S:top-k-sorting}, we adapt both versions of our algorithm to efficiently find the smallest $k$ items in sorted order, for any (given) $k$.
In \Cref{sec:sampling}, we describe how to use our sorting algorithm and any algorithm that generates a uniformly random topological order to produce a crude approximation to the number of possible topological orders.
In \Cref{S:double-pairing-heaps}, we develop and analyze a new heap implementation, the \emph{double pairing heap}, a variant of the pairing heap that has the efficiency needed for the results in \Cref{S:top-k-sorting}.
In~\cref{S:alternative}, we discuss the alternative sorting algorithm of~\citet{vanderhoog2024} and its relation to our algorithms.

\section{Preliminaries}\label{S:preliminaries}
In this section we present a graph-theoretic formulation of the problems we consider, and we define and discuss heaps with the working set bound, a critical component of our algorithms.

\subsection{Graph-Theoretic Formulation}\label{S:graph-formulation}

To allow a fine-grained analysis of the computational complexity of algorithms for the two sorting problems we consider, it is convenient to formulate these problems in the following graph-theoretic way: 

The input is given in the form of a directed graph $G$ whose $n$ vertices are the items, with each of the $m$ pre-existing comparison outcomes $v < w$ between two items $v$ and $w$ represented by an arc $vw$.  Our first sorting problem, which we call the \emph{DAG sorting problem}, is to compute the unknown total order of the vertices of $G$ by doing additional binary comparisons of the vertices. This problem has a solution if and only if $G$ is a directed acyclic graph (DAG), and each possible solution is a \emph{topological order} of the DAG, i.e., a total order such that if $vw$ is an arc, then $v < w$.  

A lower bound on the number of comparisons needed to solve the problem is $\log T$,\footnote{Throughout this paper ``$\log$'' without a base denotes the base-two logarithm.} where $T$ is the number of possible topological orders of $G$.  This lower bound is an immediate consequence of the well-known lower bound for sorting by binary comparisons, and more generally for identifying one object in a set by asking yes-no questions: A simple decision-tree argument shows that the number of queries needed is at least the base-2 logarithm of the size of the set, and this bound holds to within an additive constant even in the average case.  Furthermore the lower bound holds even for randomized algorithms, both Las Vegas and Monte Carlo, by Yao's min-max principle~\cite{Yao77}. 

Our primary objective is to minimize the number of comparisons; our secondary objective is to make the running time small. The problem is interesting for settings in which comparisons are expensive and also when $m$ is small compared to $n\log n$; otherwise, one can ignore the pre-existing comparisons and sort the items from scratch. 

An equivalent view of the problem is that the given comparisons define a partial order, and the problem is to find an unknown linear extension of this partial order by doing additional comparisons. We prefer the DAG view, because the DAG induced by the pre-existing comparisons may not be transitively closed (if $uv$ and $vw$ are arcs then so is $uw$) nor transitively reduced (if $vw$ is an arc then there is no other path from $v$ to~$w$).  Ideally we want a solution that works for an arbitrary set of pre-existing comparisons, and we do not want to take the time to compute either the transitive reduction or the transitive closure of the DAG representing these comparisons.


Our second sorting problem, which we call the \emph{top-$k$ DAG sorting problem}, generalizes DAG sorting.  Given a DAG representing a set of pre-existing comparison outcomes and a parameter $k \leq n$, the problem is to output the smallest $k$ items in increasing sorted order.  We discuss this problem in more detail in \Cref{S:top-k-sorting}.  

\subsection{Heaps and Their Working Sets}\label{S:heaps-working-set}

Our algorithm substantially deviates from the techniques previously used for DAG sorting. Our basic algorithm is a classical topological sorting algorithm modified to use a heap to choose the next-smallest vertex.  We prove that if the heap has the so-called \emph{working-set bound}, then our DAG sorting algorithm is efficient.

A \emph{heap} is a data structure $H$ storing a set of items, each having a key selected from a totally ordered set.  The heap is initially empty.  For our purpose, heaps support three operations:  $\Findmin(H)$, which returns an item of smallest key in heap $H$; $\Insert(H, x)$, which inserts item $x$, having a predefined key and not in $H$ already, into heap $H$; and $\Deletemin(H)$, which deletes and returns $\Findmin(H)$.  Some heap implementations support a fourth operation, $\Decreasekey(H, x, k)$, which, given the location of an item $x$ in heap $H$ whose key is greater than $k$, replaces the key of $x$ in heap $H$ by $k$.  Fibonacci heaps~\cite{fredmanfibonacci1987} and other equally efficient heap implementations support find-min in $\OO(1)$ worst-case time and no comparisons, insert and decrease-key in $\OO(1)$ amortized time, and delete-min on an $n$-item heap in $\OO(\log n)$ time.

In our application, we need a more refined bound for delete-mins that depends on the sequence of $\Insert$ and $\Deletemin$ operations done on the heap.  This is the \emph{working-set bound}, defined as follows.  We view each item that is deleted and later re-inserted as being a new item, so that each item is inserted and deleted only once.  Consider an item $x$ that is inserted into the heap at some time and later deleted.  An item $y$ is in the \emph{working set} of $x$ if it is inserted into the heap at or after the time $x$ is inserted but before $x$ is deleted.  (Thus $x$ is in its own working set.)  The \emph{working-set size} $w(x)$ of $x$ is the size of its working set.  The heap has the \emph{working-set bound} if each find-min takes $\OO(1)$ time and no comparisons, each insertion takes $\OO(1)$ time, and each delete-min of an item $x$ takes $\OO(1+\log w(x))$ time.  These bounds can be amortized.  (If decrease-key is a supported operation, it has no required time bound.)  \ifheapspublished Although it is not obvious, one can prove that a heap with the working-set bound has amortized time bounds of $\OO(1)$ for insert and $\OO(\log H_t)$ for a delete-min at time $t$, where $H_t$ is the size of the heap at time $t$~\cite{Haeupler2024}. \fi

The working-set bound is related to a similar bound for binary search trees, one possessed by splay trees~\cite{splay-trees}, but it is not the same.  There are several alternative definitions of the working-set bound for heaps, of which the one we have given here is the weakest. \ifheapspublished  The relation between these definitions is analyzed in \cite{Haeupler2024}. \fi  \citet{iacono-pairing-heaps} considered a stronger bound, which he called just the working-set bound but we shall call the \emph{intermediate working-set bound}.  The \emph{intermediate working-set size} $w'(x)$ of an item $x$ is the maximum number of items in its working set that are in the heap at the same time, with the maximum taken over all times from the insertion of $x$ to just before the deletion of $x$.  The heap has the \emph{intermediate working-set bound} if each insertion takes $\OO(1)$ time and each delete-min of an item $x$ takes $\OO(1+\log w'(x))$ time.  These bounds can be amortized.  Iacono proved that a pairing heap~\cite{pairing-heaps}, a simple form of self-adjusting heap, has the intermediate working-set bound, provided that the heap ends empty.  By Iacono's result, a pairing heap has the efficiency we need for DAG sorting.  \ifheapspublished In general, a heap that has the working-set bound also has the intermediate working-set bound~\cite{Haeupler2024}.  A splay tree used as a heap in an appropriate way also has the intermediate working-set bound, again provided that the heap ends empty~\cite{Haeupler2024}.  These data structures support find-min in $\OO(1)$ worst-case time and no comparisons. \fi

To solve the top-$k$ DAG sorting problem, we need a bound better than the working-set bound.  This is the \emph{strict working-set bound}, which is defined in the same way as the working-set bound except that only items that are eventually deleted from the heap are members of a working set: Items inserted into the heap but never deleted do not count.  The strict bound differs from the non-strict bound only if some items are inserted but never deleted; that is, the heap ends non-empty.  In \cref{S:double-pairing-heaps}, we develop and analyze the \emph{double pairing heap}, a new form of pairing heap in which delete-mins do two pairing passes rather than just one, and we prove that these heaps have the strict intermediate working-set bound, which is defined analogously to the intermediate working-set bound but only counting items that are eventually deleted.  A double pairing heap has the efficiency we need for top-$k$ DAG sorting.  This is the only heap implementation we know has the strict working-set bound.  In particular, we do not know whether this bound holds for standard pairing heaps.  We leave this question as an open problem.    \ifheapspublished More generally, \citet{Haeupler2024} prove that any heap that has the  working-set bound for some sequence of heap operations in fact has the strict working-set bound for the same sequence (with a bigger constant factor).This includes both double pairing heaps and the heap developed by \citet{dijkstra-universal-optimality}.\fi

\section{Related Work}\label{S:prior-work}

\paragraph{DAG sorting and top-$k$ DAG sorting}
\citet{fredman-generalized-supi-1976} considered the generalization of the DAG sorting problem in which there is an unknown total order selected from an \emph{arbitrary} subset of the total orders of $n$ elements, and the problem is to find the unknown total order by doing binary comparisons of the elements.  He showed that if there are $T$ possible total orders, $\log T + 2n$ binary comparisons suffice. His algorithm is highly inefficient, however.

For the special case of sorting a DAG,~\citet{kislitsin} conjectured that there always exists a \emph{balancing comparison}: a comparison ``$x < y$?'' such that the fraction of the topological orders of $G$ for which the answer is ``yes'' lies between $1/3$ and $2/3$.  This is the 1/3--2/3 conjecture.  More generally, one can ask whether there is always a comparison that splits the set of topological orders into two subsets, each with a fraction of between $\delta$ and $1-\delta$ of the original set, for some constant $\delta \leq 1/3$.  (In general there is no such comparison for any value of $\delta$ larger than $1/3$.)  \citet{fredman-generalized-supi-1976} and \citet{1/3-conjecture-2} later independently made the 1/3-2/3 conjecture. \citet{posets-3/11} showed that the general conjecture is true for $\delta = 3/11$.  The value of $\delta$ has been improved several times since then~\cite{balancing-extensions-1,balancing-extensions-2,balancing-extensions-3}. It follows immediately that there is an algorithm that always does at most $\log_{1/(1-\delta)} T$ comparisons: Repeatedly find a balancing comparison and do it.

\citet{1/3-conjecture-2} proved that for the special case of merging two sorted lists, the 1/3--2/3 conjecture is true, and gave a way to find a 1/3--2/3 balancing comparison in polynomial time.  For this special case, \citet{HwangLin} showed earlier that $\OO(\log T)$ comparisons suffice, and \citet{BrownTarjan} gave an algorithm running in $\OO(\log T)$ time.  Our new algorithm merges \emph{any} collection of sorted lists in minimum time and comparisons, to within constant factors. 

\citet{kahn-kim-supi-1992} gave a polynomial-time algorithm for the DAG sorting problem that does $\OO(\lg T)$ comparisons.  Their algorithm uses the ellipsoid method to find comparisons that are good in the amortized sense.  \citet{cardinal-supi-2010} gave several algorithms for sorting a DAG that avoid the use of the ellipsoid method.  In particular, they devised an algorithm that runs in $\OO(n^{2.5})$ time and does $\OO(\lg T)$ comparisons. This is the fastest previous algorithm that sorts a DAG in $\OO(\log T)$ comparisons.  Another algorithm in \cite{cardinal-supi-2010} has the same time bound and does $(1+\eps)\log T  + \OO_\eps(n)$ comparisons, where $\eps$ is an arbitrarily small positive constant, and the constant in the big $\OO$ term depends on epsilon.  

Independent of and concurrent with our work, \citet{vanderhoog2024tight} considered the DAG sorting problem, but with a different representation of the input. They assume that the transitive closure of the DAG (the partial order implied by the DAG) is represented by a matrix whose entries can be queried one at a time. For any positive constant $c$ they gave an algorithm that solves the sorting problem in $\OO(n^{1+1/c})$ matrix queries, $\OO(c\log T)$ comparisons, and $\OO(n^{1+1/c}+c\log T)$ time.  They also showed that in this query model these bounds are best possible.  

Computing the transitive closure matrix is closely related to binary matrix multiplication and can be done in $\Theta(n^{\omega})$ time, where $\omega$ is the matrix multiplication exponent. The result of \citeauthornolink{vanderhoog2024tight} therefore implies an $\OO(n^{\omega}+\log T)=\OO(n^{2.371339})$-time algorithm with $\OO(\log T)$ comparisons for the DAG sorting problem, improving on the $\OO(n^{2.5})$-time algorithm of \cite{cardinal-supi-2010}. 


Another variant of great practical importance is the top-$k$ DAG sorting problem, in which only the $k$ smallest vertices must be returned in sorted order, where $k$ is an input parameter. This problem has many practical applications in rank aggregations, where often only the ranking of the top candidates is of interest, and in settings where comparisons are expensive, because human experts are required to rank alternatives via comparisons. An important and typical such setting is reinforcement learning with human feedback~\cite{christiano2017deep}. See~\citet{top-k-supi} for further pointers and applications of this problem. \citet{top-k-supi} give an algorithm for top-$k$ DAG sorting that runs in $\OO(n \log k + m)$ time and does $\OO(n \log k)$ comparisons.  They also provide empirical evaluations. From a theoretical perspective we remark that the top-$k$ DAG sorting problem can be solved from scratch (ignoring the pre-existing comparisons) in $\OO(n+k\log k)$ time and comparisons, by finding the $k$ smallest vertices using linear-time selection~\cite{BlumFPRT73}, and then sorting these $k$ vertices. In \cref{S:top-k-sorting} we show that our sorting algorithm easily adapts to do top-$k$ DAG sorting in minimum time and comparisons, to within constant factors, assuming a suitable representation of the DAG and using a suitable heap implementation. This natural extension is unique to our approach: Prior algorithms for DAG sorting do not generalize naturally or in some cases at all to the top-$k$ setting.   

For additional related work, see the survey of \citet{survey-supi-and-similar-problems}.

\paragraph{An Alternative Algorithm}
In follow-on work to the original version~\cite{haeupler-et-al2024} of our paper,~\citet{vanderhoog2024} have proposed an alternative algorithm for the DAG sorting problem with the same asymptotic efficiency.  
While we dub our algorithm explained in \cref{S:topological} \emph{topological heapsort}, their algorithm might well be called \emph{topological insertion sort}. We discuss this alternative algorithm and its connection to our algorithm more closely in \cref{S:alternative}. 

In this revision of our paper~\cite{haeupler-et-al2024}, we have simplified our efficiency analysis by basing it on Lemma 3 of~\cite{vanderhoog2024}. We have included a simple, self-contained proof of a version of their lemma that suffices for us.

In \cref{S:topological-heapsort-lookahead}, we improve our basic algorithm so that it has not only the best possible time complexity, but also the best possible query complexity. There are two very similar ways of  doing this. One of them, \emph{handling a longest path separately}, we used in the original version of our paper~\cite{haeupler-et-al2024}, version 1, April 6, 2024. This idea dates back to \citet{cardinal-supi-2010}; \citet{vanderhoog2024} independently used a similar approach. The other one, \emph{handling bottlenecks separately}, we used in our paper~\cite{haeupler-et-al2024}, version 2, July 22, 2024. These two ideas are interchangeable in the sense that they can be used both in our algorithm and in the algorithm of~\citet{vanderhoog2024}.  In this version of our paper we use the bottlenecks idea. 

\paragraph{Heaps with a working-set bound}\label{P:working-set-bound}

\citet{elmasry-strong-working-set} devised a heap with a bound better than the intermediate working-set bound: The time for a delete-min of $x$ is $\OO(1+\log s(x))$, where $s(x)$ is the number of items in the working set of $x$ that are still in the heap just before $x$ is deleted.  \ifheapspublished One can obtain the same bound in a more straightforward way using a finger search tree~\cite{Haeupler2024}.  \fi   
\citet{dijkstra-universal-optimality} developed a heap that has the intermediate working-set bound and also supports decrease-key operations in $\OO(1)$ amortized time.  Both these data structures support find-min in $\OO(1)$ worst-case time and no comparisons.  Neither is known to have the strict working-set bound.

See~\citet{kozma_saranurak,munro2019dynamic,risaSurveyPic,bernhardsSurvey} for other beyond-worst-case time bounds for heaps. 

\paragraph{Related sorting and ordering problems}
Our approach is strongly influenced by a recent result~\cite{dijkstra-universal-optimality} on the distance-ordering problem. The input to this problem is a directed graph with non-negative arc weights and a source vertex. The problem is to sort the vertices of the input graph by their distance from the source. This problem can be solved in $\OO(m + n\log n)$ time and comparisons by running Dijkstra's algorithm using a Fibonacci heap~\cite{fredmanfibonacci1987}.  These bounds are best possible in the worst case.  The authors of \cite{dijkstra-universal-optimality} improve this worst-case result by giving a so-called universally optimal algorithm: They show that if Dijkstra's algorithm is implemented using a heap with the working-set bound that also supports decrease-key operations in $\OO(1)$ amortized time, then for any fixed graph $G$ with a given source vertex, the algorithm takes the minimum time needed (to within a constant factor) to solve the problem on $G$ with a worst-case choice of arc weights. Moreover, an extension of Dijkstra's algorithm minimizes not only the time but also the number of comparisons to within a constant factor.

The DAG sorting problem also asks for a universally optimal algorithm in that on any given DAG it minimizes the running time and number of comparisons.  Both problems are generalizations of sorting.  Not only are the two problems similar, but we show that they can be solved by similar techniques.

Another algorithm related to ours is the adaptive heapsort algorithm of \citet{levcopoulos1993adaptive_heapsort}. This algorithm sorts a sequence of numbers by first building a heap-ordered tree, the \emph{Cartesian tree} of the sequence, defined recursively as follows: The root is the minimum number in the sequence, say $x$, its left subtree is the Cartesian tree of the prefix of the sequence up to but not including $x$, and its right subtree is the Cartesian tree of the suffix of the sequence from the element after $x$ to the end of the sequence.  The Cartesian tree can be built in at most $2n-3$ comparisons~\cite{levcopoulos1993adaptive_heapsort}.  The algorithm finishes the sort using a heap to store the possible minima, initially only the root.  After a delete-min, say of $v$, the children of $v$ in the Cartesian tree are inserted into the heap.  \citet{levcopoulos1993adaptive_heapsort} use a standard heap in their algorithm, but if one uses a  heap with the working-set bound instead, our results imply that adaptive heapsort does $\OO(n+\log T)$ comparisons, where $T$ is the number of topological orders of the Cartesian tree viewed as a DAG.

An ordering problem that is dual to DAG sorting is that of producing a given partial order on a totally ordered set, by doing enough comparisons so that the partial order induced by the comparison outcomes is isomorphic to the input partial order.  The two problems are dual in the sense that if an $n$-element partial order has $T$ topological orders, the partial-order production problem on the same partial order requires $\log n! - \log T$ comparisons~\cite{Schonhage76, Yao89}, both worst-case and expected-case.  \citet{CardinalFJJM09} gave an algorithm for partial-order production that does a number of comparisons within a lower-order term of the lower bound plus $\OO(n)$.  

\paragraph{Sampling and counting topological orders}
Even though we give a sorting algorithm that does $\OO(\lg T)$ comparisons, calculating $\lg T$ exactly is hard, since the problem of determining $T$ is \#P-complete \cite{counting-T-is-hard}. There are algorithms that compute $T$ approximately \cite{counting-linear-extensions-1991,counting-linear-extensions-2017}, however. \citet{counting-T-is-hard} have shown that a constant-factor approximation to $T$ can be obtained with high probability using $\OO(n^2 \polylog n)$ calls to an oracle that generates a uniformly random topological order. There are multiple results on generating a random topological order \cite{random-linear-extension-1991,random-linear-extension-1991-2,random-linear-extension-1998,random-linear-extension-2006}, with the state-of-the-art approach having $\OO(n^3\log n)$ time complexity and leading to an $\OO(n^5\polylog n)$ approximation algorithm for~$T$. In \cref{sec:sampling}, we show how to use one oracle call to get a constant-factor approximation to $\lg T$, and hence a polynomial approximation to $T$.

\section{Topological Heapsort and Its Efficiency}\label{S:topological}
This section introduces our basic algorithm, which we call \emph{topological heapsort}. The algorithm combines two classic algorithms, topological sort \cite{knuth1997art} and heapsort \cite{heapsort1,heapsort2} in a simple way. We describe the algorithm in \cref{S:topological-heapsort}.  In \cref{S:topological-heapsort-analysis}, we prove that our algorithm runs in $\OO(m+\log T)$ time and does $\OO(n+\log T)$ comparisons, if it is implemented with a heap having the working-set bound if it ends empty, and in particular with a pairing heap. 

\subsection{Topological Heapsort}
\label{S:topological-heapsort}

We start by recalling a basic result in graph theory: A directed graph is a DAG if and only if it has a topological order.  One can prove this by the following classic \emph{topological sort} algorithm~\cite{kahn1962topological,knuth1997art}:  Call a vertex a \emph{source} if it has no entering arcs.  Given a directed graph, repeatedly delete a source and its outgoing arcs, until there are either no vertices or no sources.  In the former case, the vertex deletion order is a topological order; in the latter case, one can find a cycle by starting at any remaining vertex and building a path by repeatedly traversing some arc in the reverse direction and continuing until a vertex is repeated.

To make this algorithm efficient, one needs to keep track of the current set of sources.  The algorithm of \citet{kahn1962topological} does this by maintaining for each vertex its current in-degree (number of incoming arcs). The sources are the vertices with in-degree zero.  When a vertex is deleted, the in-degrees of its immediate successors (those reached by an outgoing arc) are decremented.  In each iteration, the algorithm finds a source by examining all the remaining vertices, which results in a worst-case time bound of $\OO(n^2)$.  \citet{knuth1997art} added the idea of maintaining the current set of sources in a separate data structure, for which he uses a queue.  This reduces the running time to $\OO(m+n)$. 

Given a DAG, our task is to find a \emph{specific} topological order, the one corresponding to the unknown total order of the vertices.  Our basic algorithm is topological sort with the current set of sources stored in a heap.  The key of a vertex is the vertex itself.  Each step deletes the minimum vertex, say $v$, from the heap, adds $v$ to the total order, decrements the in-degree of each vertex $w$ such that $vw$ is an arc, inserts into the heap each such $w$ whose in-degree is now zero, and finally deletes $v$ and its outgoing arcs from the DAG. The complete description of the algorithm is given in \cref{alg:topological-heapsort}. 

This algorithm is not only a version of topological sort, it is also a form of heapsort~\cite{heapsort1,heapsort2}, which in turn is a form of selection sort: The algorithm adds the vertices to the total order in increasing order, using a heap to do so.  It differs from standard heapsort in that the heap contains only the current sources, which are the only candidates for the next minimum, rather than all the undeleted vertices.  We call the algorithm \emph{topological heapsort}.  

\begin{algorithm}[H]
\SetAlgoLined
\KwData{A directed acyclic graph $G$}
\KwResult{A sorted list of vertices $L$}
 Initialize $L$ as an empty list to store sorted vertices\;
 Compute the in-degree of each vertex\;
 Initialize heap $H$ to contain all vertices with in-degree zero (the sources)\;
 \While{$G$ is not empty}{
  Delete the minimum vertex $v$ from $H$\;
  Add $v$ to the back of $L$\;
  
  \ForEach{arc $vw$}{
   Decrement the in-degree of $w$\;
   \If {$w$ now has in-degree zero (is a source)}{
    Add $w$ to $H$\;
   }
  }
  Delete $v$ and each outgoing arc $vw$ from $G$\;
 }
 \caption{\cref{alg:topological-heapsort}: Topological Heapsort}
\label{alg:topological-heapsort}
\end{algorithm}

A proof by induction on the number of deleted vertices shows that the $k^\mathrm{th}$ deleted vertex is the $k^\mathrm{th}$ smallest: Given that the $k-1$ smallest vertices have been deleted from the heap and from the graph, the $k^\mathrm{th}$ smallest must be a source.  Since the heap contains all the current sources, the $k^\mathrm{th}$ delete-min returns the $k^\mathrm{th}$ smallest vertex.  Thus topological heapsort is correct.   Its running time is $\OO(m+n)$ plus the time required for $n$ insertions into $H$ intermixed with $n$ delete-mins from $H$.  All the vertex comparisons are done by the heap insertions and deletions, so to bound the number of comparisons it suffices to bound the time of these operations.  
The input can be a list of the arcs in $G$; if it is, as part of the initialization we build for each vertex a list of its outgoing arcs.

We implement topological heapsort using a heap that has the working-set bound if it ends empty, such as a pairing heap.

\subsection{Efficiency of Topological Heapsort}\label{S:topological-heapsort-analysis}

To prove that topological heapsort is efficient, we must estimate the number of comparisons needed to sort the vertices of a given DAG $G$.  This has to be at least $\log T$ in the worst case and on the average, and even for Las Vegas and Monte Carlo randomized algorithms, where $T$ is the number of topological orders of $G$, by the standard information-theory lower bound argument for sorting by binary comparisons.  The lower bound for randomized algorithms follows from the classic work of Yao~\cite{Yao77}.  

We shall prove the following theorem: 

\begin{theorem}\label{T:topological-heapsort-bound} 
Topological heapsort runs in $\OO(m+\log T)$ time and does $\OO(n+ \log T)$ comparisons, if the heap has the working-set bound provided that it ends empty.
\end{theorem}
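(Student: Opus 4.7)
The plan is to split the cost of topological heapsort into its two components: basic graph processing and heap operations. The graph processing---computing in-degrees, iterating over outgoing arcs when a vertex is deleted, and updating the source list---contributes $\OO(m+n)$ time and no vertex comparisons. All vertex comparisons happen inside heap operations, of which there are $n$ inserts and $n$ delete-mins (each vertex is inserted exactly once, as an initial source or when its in-degree drops to zero, and deleted exactly once). Under the working-set bound, the inserts cost $\OO(n)$ amortized in total, and the delete-min of a vertex $x$ costs $\OO(1+\log w(x))$ time and comparisons. Summing, the running time is $\OO(m + n + \sum_x \log w(x))$ and the comparison count is $\OO(n + \sum_x \log w(x))$.

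The whole theorem then reduces to two inequalities: the main structural bound $\sum_x \log w(x) = \OO(\log T)$, and a subsidiary bound $n = \OO(m + \log T)$ that absorbs the extra $n$ inside the time bound. The subsidiary bound is easy: at most $2m$ vertices are non-isolated (each is incident to some arc), while the $c_1$ isolated vertices are freely permutable in any topological order, so $\log T \ge \log(c_1!) \ge c_1 - 1$, giving $n \le 2m + c_1 + 1 = \OO(m + \log T)$.

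The heart of the proof is the structural inequality, for which I plan to invoke (a self-contained version of) Lemma~3 of~\cite{vanderhoog2024}, as announced in~\cref{S:prior-work}. The strategy is to exhibit $2^{\Omega(\sum_x \log w(x))}$ distinct topological orders of $G$ out of the algorithm's execution. Intuitively, every element of the working set of a vertex $x$ was simultaneously a source available to the heap during $x$'s lifetime, so it is a candidate for being locally interchanged with $x$ to produce a valid alternative topological order. The main obstacle is additivity: the naive argument ``each delete-min has $w(x)$ valid next-choices, so the choices multiply to at least $\prod_x w(x)$ orders'' is wrong, since changing one choice perturbs the source set at every subsequent step. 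The lemma sidesteps this by exhibiting an explicit injection from a product structure of local witnesses (one per vertex) into the topological orders of $G$, so that contributions from different vertices compose with only constant-factor overhead. Combining the two inequalities yields time $\OO(m + \log T)$ and comparisons $\OO(n + \log T)$, as claimed.
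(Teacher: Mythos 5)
Your cost decomposition and final assembly coincide with the paper's: $\OO(m+n)$ graph work, $\OO(1)$ amortized per insertion, $\OO(1+\log w(x))$ per delete-min, and then the extra $n$ absorbed via $n=\OO(m+\log T)$. Your proof of that subsidiary bound via isolated vertices ($T\ge c_1!$, so $\log T\ge c_1-1$, and at most $2m$ vertices are non-isolated) is correct and a mildly different route from the paper's \cref{L:n-small}, which instead notes there are at least $n-m$ initial sources and so $T\ge (n-m)!$.

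The gap is at the heart of the argument. You assert $\sum_x\log w(x)=\OO(\log T)$ and defer it entirely to Lemma~3 of \cite{vanderhoog2024}, but that lemma is a statement about the DAG associated with a system of integer intervals, not about working sets of a heap during a run of topological heapsort. To invoke it you still need the bridging step that the paper proves as \cref{L:working-set-log-T}: index the vertices $v_1,\dots,v_n$ by insertion order, note that the working set of $v_i$ is exactly an interval $[i,b_i]$ of insertion indices, and argue that every topological order of the associated interval DAG $I$ yields a distinct topological order of $G$ (an arc $v_iv_j$ of $G$ forces $v_i$ to be deleted before $v_j$ is inserted, hence $b_i<j$), so $T(I)\le T$. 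Your description of the lemma itself (``an explicit injection from a product structure of local witnesses'') does not reflect how it is stated or proved; the paper's self-contained variant, \cref{L:interval-bound}, is a short probabilistic/counting argument (random reals $r_i\in(0,k]$ plus Stirling) and gives only $\sum_i\log(b_i-a_i+1)\le\log T(I)+k\log e$, i.e.\ $\sum_x\log w(x)\le\log T+n\log e$ rather than $\OO(\log T)$. That weaker bound still suffices for the theorem once $n=\OO(m+\log T)$ is applied, so your overall plan would go through, but as written the key inequality is cited rather than proved, its statement is misremembered, and the reduction needed to even apply it to the algorithm's working sets is missing.
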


In stating bounds, we assume that $T \geq 2$, which implies $n \geq 2$; otherwise, there is a unique topological order, and no comparisons are needed to find it. 

As a first, simple observation, we prove that $n = \OO(m+\log T)$.  This allows us to eliminate an additive $n$ term in the running time of the algorithm.    

\begin{lemma}\label{L:n-small}
We have $n = \OO(m+\log T)$.
\end{lemma}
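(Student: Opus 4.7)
The plan is to partition the vertex set into \emph{isolated} vertices (those incident to no arc) and \emph{non-isolated} vertices, with respective counts $n_0$ and $n_1 = n - n_0$. The non-isolated count is immediately controlled by $m$ via the trivial bound $n_1 \leq 2m$, since each arc has exactly two endpoints and each non-isolated vertex is incident to at least one arc. What remains is to charge the isolated vertices to $\log T$.

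For this I would argue that isolated vertices participate in no ordering constraint whatsoever, so, starting from any fixed topological order of the induced subgraph on the non-isolated vertices (at least one exists, since that subgraph is itself a DAG), one can produce a topological order of $G$ by choosing $n_0$ positions among the $n$ total slots and then labeling them with the $n_0$ isolated vertices in any way. This gives $\binom{n}{n_0}\, n_0! \;=\; n!/n_1!$ distinct topological orders, so $T \geq n!/n_1!$. The quantity $n!/n_1! = n(n-1)\cdots(n_1+1)$ is a product of $n_0$ factors each at least $n_1+1$. When $n_1 \geq 1$, every factor is at least $2$, and so $\log T \geq n_0$. Combined with $n_1 \leq 2m$, this gives $n \leq \log T + 2m = O(m + \log T)$.

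The one case the argument above degenerates on is $n_1 = 0$, where ``$n_0$ factors of size $\geq 2$'' fails because each factor is only $\geq 1$. But this is precisely the case in which $G$ has no arcs, so $m = 0$ and $T = n!$, and the elementary estimate $\log(n!) \geq n$ for $n \geq 4$ (with smaller $n$ handled by the standing assumption $T \geq 2$) again yields $n = O(\log T)$. Folding the two cases together and absorbing the additive constant into the $O$-notation using $\log T \geq 1$ completes the proof.

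There is no real obstacle here: the lemma is a soft counting argument, and the main insight is simply to notice that the vertices that are not charged against $m$ (the isolated ones) are exactly the vertices that contribute a factorial factor to $T$. The only moment that needs a line of care is the degenerate edge case $n_1 = 0$, which I would dispatch with a one-line Stirling-style estimate rather than trying to unify it with the generic bound.
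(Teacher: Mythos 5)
Your proof is correct. It has the same soft-counting flavor as the paper's, but uses a different class of ``free'' vertices: you charge the isolated vertices (at least $n-2m$ of them) against $\log T$ by interleaving them arbitrarily into a fixed topological order of the rest, giving $T \geq n!/n_1!$, whereas the paper charges the \emph{sources}. Since every arc head is a non-source, there are at least $n-m$ sources, no arc joins two sources, and any permutation of the sources (placed first, followed by a topological order of the remaining vertices) extends to a topological order; hence $T \geq (n-m)!$ and $\log T \geq (n-m)/2$ whenever $m < n-1$, the case $m \geq n-1$ being immediate. The paper's choice buys a sharper charge against $m$ (only $m$ vertices, not $2m$, are excluded from the factorial) and a cleaner case analysis, since the only split needed is the trivial one $m \geq n-1$, while your choice forces the separate treatment of $n_1 = 0$ and the $n_1 \leq 2m$ slack. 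Your count $n!/n_1!$ is in itself a slightly stronger lower bound on $T$ (it counts all interleavings rather than permutations of a prefix), but for this lemma the two estimates are interchangeable, and your handling of the degenerate case via $\log(n!) \geq n$ for $n \geq 4$ and the standing assumption $T \geq 2$ is sound.
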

\begin{proof}
Assume $m <n-1$; otherwise the lemma is immediate.  The initial number of sources is at least $n-m$.  Any permutation of these sources can be extended to a topological order.  Hence $T \geq (n-m)!$, which implies $\log T \geq (n-m)/2$, since $n-m \geq 2$.
\end{proof}  

Recall from~\cref{S:heaps-working-set} that $w(v)$, the working set size of $v$, is the number of vertices inserted into the heap from the time $v$ is inserted until $v$ is deleted, including $v$. 
We need a bound on the sum of $\log w(v)$ over all the vertices.  The main lemma of~\cite{vanderhoog2024} provides a way to obtain such a bound.  We need some additional concepts.  For integers $i$ and $j$ such that $i \leq j$, we denote by $[i, j]$ the interval of integers $\{i, i+1,\dots j\}$.  Let $\{[a_i,b_i]\mid 1 \leq i \leq k\}$ be a set of $k$ intervals of integers, each a subset of $[1,k]$.  We define a DAG $I$ whose vertices are the intervals $\{[a_i,b_i]\}$ and whose arcs are the pairs $[a_i,b_i],[a_j,b_j]$ such that $b_i < a_j$.  We call $I$ the DAG \emph{associated with} the set of intervals $\{[a_i,b_i]\}$   We denote by $T(I)$ the number of topological orders of $I$.

\begin{lemma}\label{L:interval-bound-0}\textup{\cite{vanderhoog2024}}
Let $I$ be the DAG associated with the set of intervals $\{[a_i, b_i]\mid 1 \leq i \leq k\}$.  Then $\sum_{i=1}^k \log (b_i-a_i+1)=\OO(\log T(I))$.
\end{lemma}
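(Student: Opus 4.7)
The plan is to lower-bound $T(I)$ by constructing a many-to-one map $\phi$ from integer assignments to linear extensions. For each tuple $q = (q_1, \ldots, q_k) \in \mathbb{Z}^k$ with $q_i \in [a_i, b_i]$, let $\phi(q)$ be the permutation of the indices $\{1, \ldots, k\}$ obtained by sorting the intervals by $q_i$, with ties broken by interval index. My first step is to verify that each $\phi(q)$ is a topological order of $I$: whenever $b_i < a_j$, we have $q_i \leq b_i < a_j \leq q_j$, so $\phi(q)$ places $[a_i, b_i]$ before $[a_j, b_j]$, respecting the corresponding arc of $I$.

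Since the domain of $\phi$ has size $\prod_i L_i$ (where $L_i = b_i - a_i + 1$) and its image lies in the set of linear extensions of $I$, we get $T(I) \geq \prod_i L_i / M$, where $M := \max_\sigma |\phi^{-1}(\sigma)|$. To bound $M$, I fix a linear extension $\sigma$ and substitute $r_j := q_{\sigma(j)}$; the tuple $q$ lies in $\phi^{-1}(\sigma)$ precisely when $r_1 \leq r_2 \leq \cdots \leq r_k$ with $r_j \in [a_{\sigma(j)}, b_{\sigma(j)}] \subseteq [1, k]$. The number of such weakly monotone sequences of length $k$ is at most the number of size-$k$ multisets of $[1, k]$, namely $\binom{2k-1}{k} \leq 4^k$. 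Taking logarithms gives the intermediate bound $\log T(I) \geq \sum_i \log L_i - 2k$.

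The main obstacle is closing the additive $O(k)$ slack so that the lemma's clean multiplicative statement $\sum_i \log L_i = O(\log T(I))$ emerges. In the heavy regime $\sum_i \log L_i \geq 4k$ the above bound already yields $\log T(I) \geq \tfrac{1}{2} \sum_i \log L_i$, so the real difficulty is the light regime $\sum_i \log L_i = O(k)$. To handle this I would replace the crude multiset count by a sharper lattice-point analysis that uses the tight per-coordinate constraints $r_j \in [a_{\sigma(j)}, b_{\sigma(j)}]$ rather than the looser $r_j \in [1, k]$: given $r_{j-1}$, the coordinate $r_j$ is confined to $[\max(r_{j-1}, a_{\sigma(j)}), b_{\sigma(j)}]$, whose typical length is far below $L_{\sigma(j)}$ once $\sigma$ is forced to respect the partial order. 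Combining this refined fiber estimate with the structural observation that a near-chain interval poset forces almost every $L_i = 1$ (so that both sides of the lemma vanish together), I expect to recover $\sum_i \log L_i \leq C \log T(I)$ for an absolute constant $C$.
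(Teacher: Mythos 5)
Your first two paragraphs are correct, and they are essentially a discretized version of the argument the paper itself gives: the paper also lower-bounds $T(I)$ by $\prod_i L_i$ divided by an exponential-in-$k$ factor (choosing each $r_i$ uniformly from the real interval $(0,k]$ and conditioning on $r_i\in(a_i-1,b_i]$), which yields the variant $\sum_i\log L_i\le \log T(I)+k\log e$ stated as \cref{L:interval-bound}; your double counting, with fibers of size at most $\binom{2k-1}{k}\le 4^k$, gives the same inequality with $2k$ in place of $k\log e$. (One small inaccuracy: ``precisely when'' should be ``only if'' --- with ties broken by index, a weakly increasing $r$ need not map back to $\sigma$ --- but an upper bound on the fiber is all you use, so this is harmless.)

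The gap is that the quoted lemma has no additive term, and your third paragraph does not prove it: it is a plan (``I would replace\dots'', ``I expect to recover\dots''), and as described it cannot succeed. Any fiber-counting refinement still only yields bounds of the shape $T(I)\ge \prod_i L_i/M$, i.e., additive slack; and in the light regime the fibers genuinely can be almost the entire domain, because the multiplicative statement hinges on the density hypothesis that the $k$ intervals fit inside $[1,k]$. Indeed it already fails for $k$ intervals inside $[1,k+1]$: take $[1,2]$ together with the singletons $[j,j]$ for $3\le j\le k+1$; the associated DAG is a chain, so $T(I)=1$, yet $\sum_i\log(b_i-a_i+1)=1$. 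So any correct proof of the light regime must exploit that density hypothesis quantitatively, which neither your refined per-coordinate fiber estimate nor the heuristic ``a near-chain interval poset forces almost every $L_i=1$'' does --- the latter is, in quantitative form, precisely what has to be proved. Note also that the paper itself does not prove the quoted lemma: it cites \cite{vanderhoog2024} for it, proves only the additive variant \cref{L:interval-bound} (remarking that in that form the constants are tight, with equality for $n$ copies of $[1,n]$), and then removes the resulting additive $n$ term not by strengthening the interval lemma but by the lookahead/bottleneck analysis via \cref{L:many-bottlenecks}. In short, what you have actually established is the paper's variant lemma, by essentially the paper's own argument, not the statement as quoted.
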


For completeness we state and prove a variant of this lemma that suffices for us.

\begin{lemma}\label{L:interval-bound}
Let $I$ be the DAG associated with the set of intervals $\{[a_i, b_i] \mid 1 \leq i \leq k\}$.  Then $\sum_{i=1}^k \log (b_i-a_i+1) \leq \log T(I)+ k\log e$, where $e$ is the base of the natural logarithm.  \end{lemma}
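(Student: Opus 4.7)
The plan is to prove the equivalent multiplicative form $\prod_{i=1}^k L_i \le T(I)\cdot e^k$, where $L_i = b_i - a_i + 1$, by a continuous volume argument. A purely combinatorial attempt---count integer tuples $(c_1,\ldots,c_k)$ with $c_i \in [a_i,b_i]$ and assign each to its sorted permutation---runs aground because the per-topological-order count can exceed $e^k$: only the average across topological orders stays bounded, and that average is precisely the claim. Lifting to continuous volumes converts this average statement into a clean pointwise bound.

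Concretely, I would replace each integer interval $[a_i,b_i]$ by the half-open real interval $[a_i,b_i+1) \subseteq [1,k+1)$ of length $L_i$ and work inside the product region
\[
R \;=\; \prod_{i=1}^k [a_i,b_i+1) \;\subseteq\; [1,k+1)^k,
\]
of Lebesgue measure $\prod_i L_i$. For each permutation $\sigma$ of $[k]$, let $R_\sigma = \{g \in R : g_{\sigma(1)} < \cdots < g_{\sigma(k)}\}$. The cells $R_\sigma$ partition $R$ up to a measure-zero tie set, so $\sum_\sigma \mu(R_\sigma) = \prod_i L_i$.

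Two observations then close the argument. First, whenever $i \to j$ is an arc of $I$ (so $b_i < a_j$), every $g \in R$ satisfies $g_i < b_i + 1 \le a_j \le g_j$, forcing $i$ to precede $j$ in the sorted order of $g$; hence $\mu(R_\sigma) = 0$ unless $\sigma$ is a topological order of $I$, leaving at most $T(I)$ nonzero summands. Second, since $R \subseteq [1,k+1)^k$, a cube of volume $k^k$, symmetry across its $k!$ equal-measure strict-order cells gives $\mu(R_\sigma) \le k^k/k!$ uniformly in $\sigma$. Combining, $\prod_i L_i \le T(I)\cdot k^k/k!$; the elementary inequality $k! \ge (k/e)^k$ (immediate from $\ln k! \ge \int_1^k \ln x\,dx$) upgrades the right-hand side to $T(I)\cdot e^k$, and taking base-two logarithms finishes.

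The main obstacle is spotting the right reformulation: any pointwise combinatorial bound on integer tuples is provably too weak, but moving to continuous half-open intervals lets the symmetry of the ambient cube $[1,k+1)^k$ supply a uniform bound that survives the sum over topological orders.
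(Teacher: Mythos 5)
Your proof is correct and is essentially the paper's own argument, recast in measure-theoretic rather than probabilistic language: the paper draws $r_i$ uniformly from $(0,k]$ and observes that the event ``$r_i \in (a_i-1,b_i]$ for all $i$'' (probability $\prod_i (b_i-a_i+1)/k^k$) forces the uniformly distributed induced permutation to be a topological order, which is exactly your inequality $\prod_i L_i \le T(I)\,k^k/k!$, finished in both cases by Stirling's bound $k! \ge (k/e)^k$.
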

\begin{proof}
For each $i$ between $1$ and $k$ inclusive, choose a real number $r_i$ uniformly at random from the real interval $(0, k]$, independently for each $i$.  With probability $1$ the $r_i$ are distinct.  Let $\pi$ be the permutation of $[a_i,b_i]$ given by ordering the $[a_i,b_i]$ in increasing order on $r_i$.  Each possible permutation is equally likely.  If each $r_i$ is in the real interval $(a_i-1, b_i]$, then $\pi$ is a topological order of $I$.  The probability of this happening is $\prod_{i=1}^k (b_i-a_i+1)/k$.  It follows that $T(I)\geq k! \cdot \prod_{i=1}^k (b_i-a_i+1)/k$.  Taking logarithms gives $\log T(I) \geq \sum_{i=1}^k \log (b_i-a_i+1) + \log k! - k\log k$.  By Stirling's approximation of the factorial, $\log k! \geq k\log k - k\log e$.  The lemma follows.   
\end{proof}

We remark that the bound in~\cref{L:interval-bound} is an equality for the set of $n$ intervals all equal to $[1,n]$.  Hence the constant factors in this form of the bound cannot be improved.   

We use~\cref{L:interval-bound} to bound the sum of the logarithms of the working-set sizes in a run of topological heapsort. 

\begin{lemma}\label{L:working-set-log-T}
Given a run of topological heapsort, let $w(v)$ for $v \in V$ be the working-set size of $v$.  Then $\sum_{v\in V}\log w(v) \leq  \log T+n\log e$.  
\end{lemma}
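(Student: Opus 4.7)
The plan is to reduce the claim directly to \cref{L:interval-bound} by associating each vertex $v \in V$ with a suitable interval and showing that the associated interval DAG has no more topological orders than $G$.

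First I would set up the intervals. In a run of topological heapsort there are exactly $n$ insertion events and $n$ deletion events, totaling $2n$ operations. For each vertex $v$, let $a_v$ be the index (in $\{1,\dots,n\}$) of $v$'s insertion among the $n$ insertions, and let $b_v$ be the number of insertions that have occurred by the time $v$ is deleted (so $v$'s own insertion contributes, giving $b_v \geq a_v$). By the definition of the working set, $w(v) = b_v - a_v + 1$. Let $I$ be the DAG associated with the intervals $\{[a_v,b_v] : v \in V\}$ in the sense of \cref{L:interval-bound}, so $I$ has $n$ vertices identified via the bijection $v \leftrightarrow [a_v,b_v]$.

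The main step — and the only nontrivial one — is to prove $T(I) \leq T$. I will show this by establishing that every topological order of $I$, viewed via the bijection as an ordering of $V$, is a topological order of $G$; equivalently, every arc of $G$ is forced to be respected by any topological order of $I$. So take an arc $uw$ in $G$. In topological heapsort, $w$ is not inserted into the heap until its in-degree drops to zero, which requires $u$ to have been deleted first. Thus $u$'s deletion strictly precedes $w$'s insertion, so at the moment $w$ is inserted the number of insertions-so-far strictly exceeds the number of insertions-so-far at the moment $u$ was deleted: $b_u < a_w$. By the definition of $I$, this is an arc of $I$. Hence $I$ contains (the image of) every arc of $G$, so linear extensions of $I$ are linear extensions of $G$, yielding $T(I)\le T$.

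Finally I apply \cref{L:interval-bound} to the $n$ intervals $\{[a_v, b_v]\}$, giving
\[
\sum_{v\in V}\log w(v) = \sum_{v\in V}\log(b_v - a_v + 1) \leq \log T(I) + n\log e \leq \log T + n\log e,
\]
which is the claim. The anticipated obstacle was the comparison of $T(I)$ to $T$, but because topological heapsort deletes vertices in the (unique) sorted order and therefore cannot insert a successor until its predecessor is gone, the insertion/deletion timestamps automatically encode every arc of $G$ into $I$, so the argument goes through directly.
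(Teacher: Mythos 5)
Your proof is correct and follows essentially the same route as the paper: index the vertices by insertion order, assign each vertex the interval of insertion indices spanning its time in the heap, observe that any arc $uw$ of $G$ forces $b_u < a_w$ so every topological order of the interval DAG $I$ yields a distinct topological order of $G$ (hence $T(I)\le T$), and conclude via \cref{L:interval-bound}. No gaps.
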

\begin{proof}
Consider a run of topological heapsort.  Let $[v_1, v_2,\dots, v_n]$ be the sequence of vertices in the order they are inserted into the heap.  For each vertex $v_i$, let $[a_i=i, b_i]$ be the set of indices of vertices inserted into the heap from the time $v_i$ is inserted until the time $v_i$ is deleted, including $i$, the index of $v_i$.  Then $w(v_i)=b_i-a_i+1$.  Let $I$ be the DAG associated with the set of intervals $\{[a_i,b_i]\mid 1 \leq i \leq n\}$.  We claim that any topological order of $I$ gives a topological order of $G$ if we replace each interval $[a_i,b_i]$ in the order by $v_i$.  Indeed, if there is an arc from $v_i$ to $v_j$ in $G$, then $v_i$ must be deleted from the heap before $v_j$ is inserted, so $b_i < j$, which implies that $[a_i,b_i]$ precedes $[a_j,b_j]$ in the topological order of $I$.  Each distinct topological order of $I$ gives a distinct topological order of $G$.  Hence $T(I) \leq T$.  The lemma follows from~\cref{L:interval-bound}.
\end{proof}

Now we can prove \cref{T:topological-heapsort-bound}.  The running time of topological heapsort is $\OO(m+n)$ plus the time to do the heap operations, which do all the comparisons.  The time to do the heap operations is $\OO(n+\log T)$ by \cref{L:working-set-log-T} and the definition of the working-set bound, as is the number of comparisons. By \cref{L:n-small}, $n= \OO(m+\log T)$.  It follows that the total running time is $\OO(m+\log T)$ and the total number of comparisons is $\OO(n+\log T)$.  

\section{Topological Heapsort with Lookahead}\label{S:topological-heapsort-lookahead}

The bound on comparisons in \cref{T:topological-heapsort-bound} includes an additive term linear in $n$.  This term is significant if the number of topological orders of the DAG is small, specifically sub-exponential in $n$.  In this section, we augment topological heapsort to eliminate this term in the bound.

The first step is to determine when the additive $n$ term is significant.  To do this we introduce two more concepts.  For each vertex $v$, we define $\el(v)$, the \emph{level} of $v$, to be the number of vertices on any longest path (one having the most vertices) to $v$ in the original DAG $G$.  A path to $v$ of $\el(v)$ vertices starts at a source of $G$; a vertex $v$ is a source if and only if $\el(v)=1$. We define a \emph{bottleneck} to be a vertex that is the only vertex on its level.

Bottlenecks are useful to us because if $v$ is a bottleneck, every vertex on a level higher than $\el(v)$ must follow $v$ in any topological order.  This implies that in every topological order, the bottlenecks are in increasing order by level.  This means that we can solve the sorting problem by sorting the bottlenecks by level (which takes no comparisons), sorting the non-bottlenecks by value, and merging the two sorted lists.  This is our approach.  The details require a bit of care.

We shall prove that $\log T$ is small compared to $n$ (and the original algorithm would thus perform $\Theta(\log T + n) = \omega(\log T)$ comparisons) only when at least a positive constant fraction of the vertices are bottlenecks.  As we shall see, this implies that the number of comparisons needed to sort the non-bottlenecks and to insert them into the sorted list of bottlenecks is $\OO(\log T)$, as desired.    

We can compute vertex levels in $\OO(m+n)$ time by initializing $\el(v)=1$ for each vertex $v$ and running topological sort with the following addition: Set $\el(w) \gets \max\{\el(w), \el(v)+1\}$ each time an arc $vw$ is deleted (when $v$ is the source being deleted).  Given vertex levels, we can find all the bottlenecks and order them in increasing order both by level and by value in $\OO(n)$ time and no comparisons.  

\begin{lemma}\label{L:many-bottlenecks}
If $G$ has $b$ bottlenecks, then $\log T \geq (n-b)/2$.
\end{lemma}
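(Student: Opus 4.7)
The plan is to lower-bound $T$ by showing that the level decomposition of $G$ already yields $\prod_\el s_\el!$ distinct topological orders, where $s_\el = |L_\el|$ denotes the number of vertices at level $\el$, and then to convert this product into the claimed bound via a routine factorial estimate.

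The first step I would verify is the basic property of levels: every arc $u \to v$ of $G$ satisfies $\el(v) \geq \el(u)+1$, because appending the arc to any longest path ending at $u$ produces a path ending at $v$ of length $\el(u)+1$. In particular, any two vertices at the same level are pairwise incomparable in $G$. As a consequence, any ordering that lists all vertices of $L_1$ first, then all of $L_2$, and so on, is automatically a valid topological order no matter how the vertices are arranged internally within each level, since every arc moves strictly upward between levels and is therefore never reversed. Because the internal order of each $L_\el$ can be chosen independently in $s_\el!$ ways, this construction produces $\prod_\el s_\el!$ distinct topological orders, giving $T \geq \prod_\el s_\el!$.

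To finish, I would apply the elementary estimate $s! \geq 2^{s/2}$ for every integer $s \geq 2$, proved by a short induction whose step uses $s+1 \geq \sqrt{2}$. The non-bottleneck vertices are exactly those lying on a level with $s_\el \geq 2$, so $\sum_{\el\,:\,s_\el \geq 2} s_\el = n - b$, and hence
\[
T \;\geq\; \prod_{\el\,:\,s_\el \geq 2} s_\el! \;\geq\; \prod_{\el\,:\,s_\el \geq 2} 2^{s_\el/2} \;=\; 2^{(n-b)/2}.
\]
Taking base-$2$ logarithms yields $\log T \geq (n-b)/2$, as desired. There is no real obstacle in this proof: the content lies entirely in the observation that the level decomposition is an antichain decomposition that generates many valid topological orders for free, and everything else is a counting step followed by a standard factorial bound.
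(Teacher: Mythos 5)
Your proof is correct and follows essentially the same route as the paper: both lower-bound $T$ by $\prod_\el s_\el!$ using the level decomposition and the fact that arcs go strictly upward in level. The only (cosmetic) difference is the final arithmetic — you apply $s! \geq 2^{s/2}$ directly to the non-bottleneck levels, whereas the paper uses $s! \geq 2^{s-1}$ together with the bound $\ell \leq (n+b)/2$ on the number of levels; both yield $\log T \geq (n-b)/2$.
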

\begin{proof}
Let $V_i$ be the set of vertices on level $i$, and let $\ell$ be the number of levels.  Since every level that does not contain a bottleneck contains at least two vertices, $2(\ell-b) \leq n-b$.  This gives $\ell \leq (n+b)/2$.  Every arc is from a vertex on a lower level to one on a higher level.  Thus we can form a topological order of $G$ by ordering the vertices level by level in increasing level order, and within each level ordering the vertices arbitrarily.  It follows that\footnote{Throughout this paper, $|S|$ denotes the size of a set $S$.} \[T \geq \prod_{i=1}^{\ell}|V_i|! \geq \prod_{i=1}^{\ell}2^{|V_i| - 1}= 2^{n-\ell}.\]
Taking logarithms gives $\log T \geq n - \el$.  Combining this with the inequality $\ell \leq (n+b)/2$ gives the lemma.
\end{proof}

By \cref{L:many-bottlenecks}, topological heapsort does $\OO(\lg T)$ comparisons when run on any DAG with at most $(1-\epsilon)n$ bottlenecks, where $\epsilon$ is any positive constant, since in this case $\log T \ge \epsilon n/2 = \Omega(n)$.  Thus we only need a way to handle graphs in which at least a large constant fraction of the vertices are bottlenecks.

To handle such graphs, we find all the bottlenecks and sort them as already described.  This takes no comparisons.  Then we run topological heapsort on the entire graph, but we only insert the non-bottlenecks into the heap $H$.  We use a separate data structure, specifically an array, $B$, to store the set of undeleted bottlenecks in order by level.  The next source to be deleted from the graph is either the smallest vertex in $H$ or the first vertex in $B$.  Since the set of bottlenecks is much larger than the set of non-bottlenecks, to make the algorithm efficient we delete bottlenecks in groups.  We would like to use the following idea: If $v$ is the smallest vertex in $H$, find the largest  vertex in $B$ less than $v$, say $u$, and delete from the graph all bottlenecks on $B$ up to and including $u$, doing the deletions in order by level.  This does not quite work, because deleting a bottleneck source from $G$ can create new non-bottleneck sources, and such a new source could be smaller than the smallest remaining old source.  But we can make the idea work by dividing the list of bottlenecks into \emph{chunks} in such a way that only deletion of the last vertex in a chunk can create a new non-bottleneck source, and handling the chunks one at a time.

Let us expand on this idea.  We break the sorted list of bottlenecks into chunks as follows: For each non-bottleneck $v$, mark the bottleneck $w$ of highest level such that there is an arc $wv$. There are thus at most $n - b$ marked bottlenecks.
Observe that the deletion of an unmarked bottleneck cannot create a new non-bottleneck source: The only possible source its deletion can create is the next larger bottleneck.  We define a chunk to be a maximal sequence of bottlenecks in order by level containing at most one marked bottleneck, the last one.  This definition partitions the list of bottlenecks into at most $n-b+1$ chunks.  Constructing the list of bottlenecks and dividing it into its chunks takes $\OO(m+n)$ time and no comparisons.

Our chunk-based algorithm works as follows: We initialize an array $B$ to contain the first chunk of bottlenecks.  We run topological heapsort but insert only the non-bottleneck sources into the heap $H$.  To find the current smallest vertex when both $B$ and $H$ are non-empty, we compare the smallest vertex in $H$, say $v$, with the last vertex in $B$, say $w$.  If $v > w$, we delete all remaining vertices in $B$ from $B$ and from $G$, add them to the sorted list of vertices $L$ under construction, insert any new non-bottleneck sources into $H$, and refill $B$ with the next chunk, if any.  The only such bottleneck whose deletion can create a new source is the last one, and only if it is marked.  If on the other hand $v < w$, we find all vertices in $B$ less than $w$, all of which are unmarked.  To find such bottlenecks, we do an exponential/binary search in $B$.  We delete each such vertex from $B$ and from $G$ and add it to $L$.  Once all such bottlenecks are processed, we delete $v$ from $H$ and from $G$, add $v$ to $L$, and insert any new non-bottleneck sources into $H$.  

We call the resulting algorithm \emph{topological heapsort with lookahead}. In full detail, the algorithm is as follows: Find the bottlenecks and their levels.  For each non-bottleneck vertex $w$, mark the bottleneck $v$ of the highest level such that $vw$ is an arc, if any.  Partition the bottlenecks into chunks, one ending at each marked vertex and possibly a last one containing the unmarked vertices after the last marked one.  Initialize an array $B$ to contain the first chunk of bottlenecks, in increasing order by level.  Initialize the heap $H$ to contain each source of $G$ except the lowest-level bottleneck, if any.  Initialize a sorted list of vertices $L$ to empty.  Repeat the applicable one of the following cases until $G$ is empty:

\begin{itemize}

\item [] Case 1: $H$ is non-empty, and either $B$ is empty or its minimum-level vertex is greater than $\Findmin(H)$.  Set $v \gets \Deletemin(H)$.
Delete $v$ from $G$, add $v$ to $L$, and insert any new non-bottleneck sources into $H$.

\item [] Case 2: $B$ is non-empty, and either $H$ is empty or its smallest vertex is larger than the maximum-level vertex in $B$.  Delete each vertex in $B$, add these vertices in increasing order by level to $L$, delete them from $G$, and insert any new non-bottleneck sources into $H$.  Then refill $B$ with the next chunk, if any.

\item [] Case 3: Cases 1 and 2 do not apply; that is, $H$ and $B$ are non-empty and the smallest vertex in $H$ is smaller than the largest vertex in $B$.  Set $v \gets \Findmin(H)$.  Find the largest vertex $x$ in $B$ that is less than $v$.  To find $x$, compare $v$ to the first, second, fourth, eighth, \dots{} vertex in $B$ until finding two consecutively compared vertices $y$ and $z$ in $B$ such that $y < v < z$.  Then do a binary search on the set of vertices in $B$ between $y$ and $z$ to find $x$.  If $x$ is the $j$-th vertex in $B$, this search takes $\OO(1+\log j)$ time and $\OO(1+\log j)$ comparisons.  Process the bottlenecks $u$ in $B$ up to and including $x$ in increasing order by level.  To process a bottleneck $u$, add it to $L$ and delete it from $B$ and from $G$.

\end{itemize}

We shall show that topological heapsort with lookahead is correct and efficient on any graph. As we have shown, the added complication of handling the bottlenecks separately is unnecessary unless most of the vertices are bottlenecks.

\begin{theorem}
Topological heapsort with lookahead is correct.
\end{theorem}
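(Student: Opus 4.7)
The plan is to prove by induction on the algorithm's iterations that the list $L$ is always a prefix of the unknown total order; since each iteration deletes at least one vertex, termination (after at most $n$ iterations) then forces $L$ to equal this total order in full.

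The crucial structural fact I would establish first is a lemma about bottlenecks: if $v$ is a bottleneck at level $i$, then every vertex $w$ with $\el(w) \ge i$ either equals $v$ or has $v$ as an ancestor. The key observation is that on any longest path $v_1 \to v_2 \to \dots \to v_k$ one has $\el(v_j) = j$ exactly (otherwise a longer sub-path to some $v_j$ could be spliced into the path, contradicting $\el(v_k)=k$), so the unique level-$i$ vertex $v$ appears on every longest path to a vertex of level $\ge i$. Consequently, bottlenecks are linearly ordered both by level and by position in the true total order, and every non-bottleneck on a level strictly greater than $i$ is strictly larger than the level-$i$ bottleneck.

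Next I would carry the compound invariant that (i) $L$ is a prefix of the total order, (ii) bottlenecks are deleted in strictly increasing level order, and (iii) $B$ consists of the lowest-level undeleted bottlenecks in level order, terminating at and including the first undeleted marked one (or all of them if none is marked). Invariants (ii) and (iii) follow directly from how $B$ is initialized, refilled, and processed. For invariant (i), Case 1 reduces to the standard topological-heapsort argument: the next smallest undeleted vertex must be a current source, and it cannot be a bottleneck because then it would be the first element of $B$ with value strictly less than $\Findmin(H)$, contradicting the Case 1 trigger.

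The main obstacle is Case 2, in which I must show that the bottlenecks $b_1 < b_2 < \cdots < b_s$ currently in $B$ are exactly the $s$ smallest undeleted vertices. Suppose for contradiction some undeleted non-bottleneck $u$ satisfies $u < b_s$, and pick one of smallest value. Every undeleted predecessor of $u$ must itself be a bottleneck (any non-bottleneck predecessor would contradict minimality) and, by invariant (iii), must lie in $B$ since its value is strictly less than $b_s$. Let $c_{\max}$ be the bottleneck that $u$ marked, namely the highest-level bottleneck with an arc to $u$ in the original DAG. If $c_{\max}$ were already deleted, then by invariant (ii) every bottleneck predecessor of $u$ (all of level at most $\el(c_{\max})$) would also be deleted, contradicting the existence of an undeleted one; if $c_{\max}$ were undeleted, then $c_{\max} \in B$ with $\el(c_{\max}) < \el(b_s)$, yet $c_{\max}$ being a marked bottleneck contradicts $b_s$ being the lowest-level marked bottleneck in $B$. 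Hence no such $u$ exists, and by the same reasoning all predecessors of each $b_i$ are deleted by the time $b_i$ is output, so the output order is legitimate. Case 3 is handled by the identical argument applied to the prefix $b_1, \dots, b_j$ of $B$ singled out by the binary search (with $j \ge 1$ because the failure of Case 1 together with value-distinctness yields $b_1 < \Findmin(H)$), completing the induction.
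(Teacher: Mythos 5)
Your route is genuinely different from the paper's in a pleasant way: you make explicit a structural lemma (the level-$i$ bottleneck lies on every longest path to any vertex of level $\ge i$, so bottlenecks are linearly ordered consistently with level) that the paper uses only implicitly, and instead of the paper's step-by-step argument (``each successive vertex deleted from $B$ is the smallest undeleted one'') you argue globally that the batch processed in Case 2 or 3 consists of exactly the smallest undeleted vertices, via a minimal counterexample $u$ played against the marked bottleneck $c_{\max}$. However, there is a genuine gap at the heart of the induction: you never state or establish the invariant that $H$ contains \emph{every} current non-bottleneck source. Your Case 1 step (``reduces to the standard topological-heapsort argument'') needs exactly this, since the standard argument rests on the heap containing all current candidates; and your Case 2/3 argument needs it too, because the minimal counterexample $u$ may have \emph{no} undeleted predecessor at all --- your argument presupposes an undeleted bottleneck predecessor so that $c_{\max}$ can be used against it, but if $u$ is a current source the only way to refute $u<b_s$ is that $u$ would have to be in $H$ and violate the case trigger, which again requires the invariant.

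This invariant is not automatic, because Case 3 deletes bottlenecks from $G$ \emph{without inserting anything into $H$}. To maintain it you must prove the key property of the marking scheme: deleting an unmarked bottleneck (given your invariant (ii), that bottlenecks are deleted in increasing level order) can never create a new non-bottleneck source --- indeed, if $v$ is unmarked and $vw$ is an arc to a non-bottleneck $w$, some higher-level bottleneck $v'$ also has an arc $v'w$, and $v'$ is still undeleted when $v$ is deleted, so $w$ does not become a source. This fact is the entire reason the marking exists and why $B$ stops at the first marked bottleneck; the paper's proof invokes it explicitly (``if this vertex is unmarked, its deletion does not create any new non-bottleneck sources''), whereas your proof never mentions it, so a source created during Case 3 could in principle sit outside $H$ and be skipped later, and the induction does not close. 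The fix is to add the $H$-invariant to your compound invariant, prove the unmarked-deletion fact, handle the ``$u$ is a source'' sub-case, and note (via your structural lemma and the failure of Case 2) that the element $x$ found in Case 3 is never the last element of $B$, so invariant (iii) survives Case 3; with these additions your argument goes through.
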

\begin{proof}
Suppose $B$ is non-empty at the beginning of an iteration of the main loop of the algorithm.  Then $B$ is non-empty at the end of the iteration, with one exception: The iteration executes Case 2, and at the end of the iteration all bottlenecks have been deleted from $G$.  It follows by induction on the number of iterations that $B$ is non-empty at the beginning of each iteration until all bottlenecks are deleted from $G$, and after the last deletion of a bottleneck from $G$, $B$ remains empty.

Suppose that up to the beginning of some iteration, the algorithm has correctly deleted the smallest vertices from $G$ in increasing order.  By the invariant in the previous paragraph, the smallest undeleted vertex at the beginning of the iteration is either the smallest vertex in $H$ or the lowest-level vertex in $B$.  It follows that if the iteration is an application of Case 1, it correctly deletes the smallest undeleted vertex. Suppose it is an application of Case 2 or Case 3.  Then the lowest-level vertex in $B$ is the smallest undeleted vertex, and if this vertex is unmarked, its deletion does not create any new non-bottleneck sources.  Hence each successive vertex deleted from $B$ during the application of the case is the smallest undeleted vertex.  In Case 2, the last deletion of a vertex from $B$ can be of a marked bottleneck.  This deletion can create new non-bottleneck sources, but these must be bigger than the marked bottleneck just deleted.  It follows by induction that the algorithm deletes vertices from $G$ in increasing order, and hence is correct.       
\end{proof}

As already mentioned, one can find the bottlenecks in $G$ by doing a topological sort, computing the level of each vertex $v$ by using the recurrence $\ell(v)=\max(\{1\}\cup\{\ell(u)+1\mid uv ~\textup{is an arc of } G\})$, and then finding each vertex that is the only vertex on its level.  This takes $\OO(m+n)$ time.

One can mark the bottlenecks by processing them in decreasing order by level.  First, unmark all the vertices.  Then, for each bottleneck $v$ in decreasing order by level, if there is an arc $vw$ to a non-bottleneck $w$ and $w$ is unmarked, mark both $v$ and $w$.  The marking process takes $\OO(m+n)$ time and no comparisons, and uses only the outgoing arc lists of the bottlenecks.  Once marking is finished, the marks of the non-bottleneck vertices can be ignored.

The total running time, excluding the time spent on heap operations and searches of $B$ in Case 3, is $\OO(m+n)= \OO(m+\log T)$.  Let $b$ be the number of bottlenecks.  The number of non-bottleneck vertices is $n-b$.  The number of marked bottlenecks is at most $n-b$.  Each iteration of Case 1 deletes one non-bottleneck from $H$ and from the graph and does one comparison outside of the delete-min from $H$.  Each iteration of Case 2 except the last one adds one marked bottleneck to $B$ and does one comparison.  Thus the number of comparisons not done during heap operations and searches of $B$ is at most $2(n-b)=\OO(\log T)$ by~\cref{L:many-bottlenecks}.  The number of heap insertions is $n-b$.  The amortized time and number of comparisons required for these is $\OO(n-b)=\OO(\log T)$.  It remains to bound the time and comparisons required by the heap deletions and the searches in $B$.  The next two lemmas bound these quantities.

\begin{lemma}\label{L:lookahead-deletions}
The time and number of comparisons required for the heap deletions is $\OO(\log T)$.
\end{lemma}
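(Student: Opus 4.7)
The plan is to adapt the analysis of \cref{L:working-set-log-T} to the lookahead algorithm, in which only the $n-b$ non-bottleneck vertices are ever inserted into the heap. By the working-set bound, the total time and comparison cost of the delete-min operations is $\OO\bigl((n-b) + \sum_{v}\log w(v)\bigr)$, where the sum ranges over the non-bottleneck vertices $v$ and $w(v)$ denotes the corresponding heap-working-set size. Since \cref{L:many-bottlenecks} yields $n-b \leq 2\log T$, the lemma will follow once we establish $\sum_v \log w(v) = \OO(\log T)$.

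To estimate this sum, I would define $I$ to be the interval DAG associated with the $n-b$ insertion-to-deletion intervals of the non-bottleneck vertices, exactly as in the proof of \cref{L:working-set-log-T}. \cref{L:interval-bound} then gives $\sum_v \log w(v) \leq \log T(I) + (n-b)\log e$, so it suffices to show $T(I) \leq T$. For this I would first establish an ancestry claim: if $u$ is a $G$-ancestor of $v$ with both endpoints non-bottlenecks, the algorithm cannot insert $v$ into the heap until every direct predecessor of $v$ has been deleted from the graph; iterating along any $u$-to-$v$ path (which may pass through bottlenecks) shows that $u$ must be deleted before $v$ is inserted, so $I$ contains an arc from $u$'s interval to $v$'s interval. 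Consequently, every topological order $\sigma$ of $I$ is a linear extension of the restriction of the $G$-ancestry relation to the non-bottleneck vertices.

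The main obstacle is the second step, namely showing that every such $\sigma$ extends to a topological order of the full DAG $G$. For this I would invoke the standard fact that any ordering of a subset $S$ of vertices consistent with the ancestry relation on $S$ can be completed to a topological order of the whole DAG, for instance by running a topological sort with a tie-breaking rule that outputs the vertex of $S$ appearing earliest in the prescribed order and handles bottlenecks eagerly in order of level. Since distinct $\sigma$'s produce topological orders of $G$ with distinct non-bottleneck restrictions, the completion map is injective, and hence $T(I) \leq T$. Combining the pieces yields $\sum_v \log w(v) \leq \log T + (n-b)\log e = \OO(\log T)$, and therefore the total time and comparison cost of the heap deletions is $\OO\bigl((n-b) + \log T\bigr) = \OO(\log T)$.
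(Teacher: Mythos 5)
Your proof is correct, but it takes a somewhat different route from the paper's. The paper argues by reduction: it forms the auxiliary graph $G'$ (the transitive closure of $G$ with all bottlenecks removed), observes that the heap operations performed by topological heapsort with lookahead on $G$ coincide with those performed by plain topological heapsort on $G'$, invokes \cref{L:working-set-log-T} as a black box to get an $\OO(n-b+\log T')$ bound, and finishes by noting $T'\leq T$ (merge any topological order of $G'$ with the bottlenecks in level order) and $n-b=\OO(\log T)$ via \cref{L:many-bottlenecks}. You instead inline the argument: you apply \cref{L:interval-bound} directly to the insertion--deletion intervals of the actual lookahead run, and replace the paper's operation-correspondence claim and merging step by two facts of your own — that a $G$-path between non-bottlenecks $u$ and $v$ (possibly through bottlenecks) forces $u$'s heap deletion before $v$'s insertion, so the interval DAG $I$ refines the ancestry relation on non-bottlenecks; and that any linear extension of that restricted ancestry relation completes injectively to a topological order of $G$, whence $T(I)\leq T$. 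Both proofs rest on the same two lemmas and the same counting idea (bottlenecks can be interleaved freely, so only the non-bottleneck order costs entropy); what yours buys is avoiding the construction of $G'$ and the somewhat delicate claim that the heap operations on $G$ and on $G'$ match, at the price of having to justify the extension fact you call standard. That fact does hold (the union of the arcs of $G$ with the consecutive pairs of the prescribed order on the non-bottlenecks is acyclic, or one can argue via your tie-breaking topological sort, using that every deletion in the lookahead algorithm occurs only at a current source), but it is the one place where your write-up is a sketch rather than a proof; spelling it out would make the argument fully self-contained.
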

\begin{proof}
Let $G'$ be the graph formed from the transitive closure of $G$ by deleting all the bottlenecks.  Let $T'$ be the number of topological orders of $G'$.  Each topological order of $G'$ can be extended to one of $G$ by merging this order with the list of bottlenecks in increasing order by level while preserving topological order.  Thus $T' \leq T$.  The heap operations done during a run of topological heapsort with lookahead on $G$ are exactly the same as those done by the corresponding run of topological heapsort on $G'$.  The amortized time and comparisons required for the heap deletions done by the latter are $\OO(n-b+\log T')=\OO(\log T')$ by~\cref{L:working-set-log-T} and~\cref{L:many-bottlenecks}.  The lemma follows.
\end{proof}

\begin{lemma}\label{L:lookahead-search-time}
In a run of topological heapsort with lookahead, the searches of $B$ take $\OO(\log T)$ time and $\OO(\log T)$ comparisons.  
\end{lemma}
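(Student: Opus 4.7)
The plan is to bound the total search cost $\sum_i \OO(1+\log j_i)$, where $i$ ranges over Case 3 iterations and $j_i\geq 1$ is the number of bottlenecks deleted in the $i$-th such iteration, by $\OO(\log T)$. Since $1+\log j_i = \OO(\log(j_i+1))$ when $j_i\geq 1$, it suffices to show $\sum_i \log(j_i+1) = \OO(\log T)$. The approach is to exhibit at least $\prod_i (j_i+1)$ distinct topological orders of $G$, from which $\log T \geq \sum_i \log(j_i+1)$ follows directly.

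Fix a Case 3 iteration $i$, let $v_i$ be the heap minimum at its start, and let $b^{(i)}_1 < b^{(i)}_2 < \cdots < b^{(i)}_{j_i}$ be the bottlenecks it deletes, listed in level order, which coincides with the true total order (since every bottleneck of higher level has a directed path through the unique bottleneck of each intermediate level). Let $p^{(i)}_1$ and $q^{(i)}$ denote the positions of $b^{(i)}_1$ and $v_i$ in the true total order; since Case 3 deletes its bottlenecks consecutively, these bottlenecks occupy positions $p^{(i)}_1,\ldots,p^{(i)}_1+j_i-1$ and $q^{(i)}\geq p^{(i)}_1+j_i$. The central claim I will prove is that $v_i$ has no $G$-arc to or from any vertex whose position lies in $[p^{(i)}_1,q^{(i)}-1]$. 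This follows by tracking $v_i$'s residency in $H$: $v_i$ is already in $H$ at the start of iteration $i$ and leaves $H$ only when it is deleted at position $q^{(i)}$, so any predecessor of $v_i$ must have been removed before $v_i$ entered $H$ (hence has position $<p^{(i)}_1$), and any successor of $v_i$ can become a source only after $v_i$ is deleted (hence has position $>q^{(i)}$).

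Given the claim, I will argue that for each $p\in\{p^{(i)}_1,p^{(i)}_1+1,\ldots,q^{(i)}\}$ the permutation obtained by placing $v_i$ at position $p$, shifting the vertices originally at positions $p,p+1,\ldots,q^{(i)}-1$ one step to the right, and leaving every vertex outside $[p^{(i)}_1,q^{(i)}]$ in place is a valid topological order of $G$. The only pairs whose relative order is affected are those of the form $(v_i,u)$ with $u$ originally at a position in $[p,q^{(i)}-1]$, and the claim forbids any $G$-arc between $v_i$ and such $u$. This produces $q^{(i)}-p^{(i)}_1+1\geq j_i+1$ distinct topological orders per Case 3 iteration.

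Finally, I will show that the intervals $[p^{(i)}_1,q^{(i)}]$ for different iterations are pairwise disjoint, so the rearrangement choices combine independently across iterations. For $i<j$, every bottleneck that appears in $B$ from the end of iteration $i$ onward has level strictly greater than that of $b^{(i)}_{j_i}$ (because $B$ is filled with undeleted bottlenecks in increasing level order and iteration $i$ removed a level-prefix of $B$), so every such bottleneck exceeds $v_i$ in the true order. Hence the Case 3 precondition $\min B \leq \Findmin(H)$ cannot hold while $v_i$ is in $H$, which forces $v_i$ to be deleted before iteration $j$ begins and gives $q^{(i)}<p^{(j)}_1$. Independence then yields $T\geq\prod_i(j_i+1)$, and taking logarithms gives the claimed bound. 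The main obstacle is the central claim above: it rests on a precise identification of the time window during which $v_i$ occupies $H$, and in particular on the observation that $v_i$'s residency in $H$ covers all of positions $p^{(i)}_1$ through $q^{(i)}$.
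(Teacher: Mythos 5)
Your proposal is correct and follows essentially the same route as the paper: both arguments produce, for each Case~3 search, $(\text{number of skipped bottlenecks}+1)$ valid re-positionings of the pending non-bottleneck relative to the bottlenecks it skipped (justified by the absence of arcs, since that vertex was already a source in $H$ before those bottlenecks were deleted), and multiply these independent choices over disjoint groups to get $T \geq \prod_i (j_i+1)$. The only cosmetic difference is that the paper absorbs the additive $\OO(1)$ per search via \cref{L:many-bottlenecks}, whereas you fold it into $\log(j_i+1)$ using $j_i \geq 1$; both are fine.
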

\begin{proof}
For each non-bottleneck $v$, let $B(v)$ be the set of unmarked bottlenecks deleted from $B$ during the iteration of Case 3 preceding the iteration of Case 1 that deletes $v$; if an iteration of Case 1 or 2 precedes the iteration of Case 1 that deletes $v$, let $B(v)$ be the empty set.  If $B(v)$ is non-empty, the time and number of comparisons spent determining $B(v)$ in Case 3 is $\OO(\log|B(v)|+1)$.  The sets $B(v)$ are disjoint, and if non-bottleneck $v$ is less than non-bottleneck $w$, then all vertices in $B(v)$ are less than all vertices in $B(w)$.   Since $v$ is a source of G when it is added to $H$, which is before the iteration of Case 3 preceding the deletion of $v$, there is no arc from a vertex in $B(v)$ to $v$.  Since each vertex in $B(v)$ is less than $v$, there is no arc from $v$ to any vertex in $B(v)$.  It follows that we can obtain a valid topological order by ordering all the bottlenecks in increasing order by level and inserting each non-bottleneck vertex $v$ into $B(v)$ in any of the $|B(v)|+1$ possible ways, including just before the first vertex on $B(v)$ and just after the last one.  (If $B(v)$ is empty, there is only one possible insertion position, just after the vertex added to the sorted list $L$ just before $v$.)
This gives $T \geq \prod_v (|B(v)|+1)$.  Hence $\log T \geq \sum_v \log(|B(v)|+1)$. By~\cref{L:many-bottlenecks}, the time and number of comparisons spent on searches of $B$ is $\OO(\log T)$. 
\end{proof}

Combining our bounds, we obtain~\cref{thm:main_intro_1}: Topological heapsort with lookahead sorts the vertices of a DAG $G$ in $\OO(m+\lg T)$ time and $\OO(\lg T)$ comparisons.

We remark that if the problem must be solved repeatedly for a fixed DAG with different total orders, the bottlenecks need only be found and marked once. 

\section{Top-\emph{k} DAG Sorting}\label{S:top-k-sorting}

In this section, we show that with small modifications our algorithms efficiently find the smallest $k$ vertices in sorted order, given $k$.  Our formulation of this problem is the same as for the complete sorting problem: The input is given in the form of an acyclic directed graph $G$ whose vertices are the items, having an arc $vw$ for each given comparison outcome $v < w$.  We denote by $T_k$ the number of possible outcomes: These are all sequences of $k$ items that are compatible with the given comparison outcomes and such that if $vw$ is a DAG arc and $w$ is in the sequence, so is $v$.  Equivalently, these are the sequences that can be generated by running the topological ordering algorithm until it orders $k$ vertices, at each step choosing the next source in all possible ways.  We denote by $m_k$ the number of DAG arcs out of the smallest $k-1$ vertices, and by $\OPT(k)$ the minimum number of comparisons needed to solve the problem for a given $k$.  In stating time bounds we assume $T_k \geq 2$, which implies $n \geq 2$; otherwise, solving the problem requires no comparisons.

Unlike the complete sorting problem, in which the information-theory lower bound of $\log T$ on comparisons is tight to within a constant factor, for the top-$k$ DAG sorting problem the corresponding lower bound of $\log T_k$ is not necessarily tight.  Let $S_k$ be the set of sources after the $k-1$ smallest sources are deleted. Then $|S_k|-1$ comparisons are needed to determine the $k^\mathrm{th}$ smallest vertex, even given the $k-1$ smallest.  Thus $\OPT(k) \geq \max\{\log T_k, |S_k|-1\}$.  If $\log T_k$ is small, the second term in this bound can dominate.

 We shall show that topological heapsort, slightly modified, solves the problem in $\OO(m_k+\OPT(k))$ time and $\OO(k+\OPT(k))$ comparisons, and then that topological heapsort with lookahead, also slightly modified, solves the problem in $\OO(m_k+\OPT(k))$ time and $\OO(\OPT(k))$ comparisons.  Each of these results requires an appropriate input representation.  If the DAG is given only as a list of its arcs, both algorithms require $\OO(m+n)$ additional preprocessing time but no additional comparisons.  This bound is $\OO(m+\log T)$ by \cref{L:n-small}.  If the problem is to be solved repeatedly on the same DAG, the preprocessing needs to be done only once.  
One can solve the problem in $\OO(n+k\log k)$ time and comparisons by ignoring the pre-existing comparisons and solving the problem from scratch: Find the $k$ smallest vertices using linear-time selection~\cite{BlumFPRT73} and then sort them.  Our results are meaningful in situations where $m_k+\OPT(k)$ or $m+\OPT(k)$ (depending on the input representation) is small compared to $n+k\log k$, or comparisons are very expensive.  An important application in which comparisons are expensive is in recommendation systems, as discussed in~\cite{top-k-supi}: Given some partial information about user preferences, the system asks a user or group of users to make additional comparisons, in order to determine the top $k$ recommendations.  The same situation occurs in the tuning of large language models~\cite{top-k-supi}.

\subsection{Top-\emph{k} DAG Sorting via Topological Heapsort}\label{S:top-k-topological-heapsort}

We begin by adapting topological heapsort to solve the problem.  We assume that we are given the initial set of sources, the initial in-degree of each vertex, and the set of arcs exiting each vertex.  If the input does not provide this information, we pre-compute it from the set of pre-existing comparisons, which takes $\OO(m+n)=\OO(m+ \log T)$ time but no comparisons.  Spending this amount of time is unavoidable if, for example, the input is just a list of the pre-existing comparison outcomes in no particular order.

We run topological heapsort using a heap that has the strict working-set bound, such as a double pairing heap.  We stop the algorithm just before the $k^\mathrm{th}$ delete-min.  At this time the algorithm has found the $k-1$ smallest vertices, in increasing order.  We find the $k^\mathrm{th}$ smallest by doing a find-min on the heap, which takes $\OO(1)$ time and no comparisons.

Let us analyze the algorithm.  The number of insertions into the heap is $k-1+ |S_k|$, where  $S_k$ is the set of sources after the deletion of the smallest $k-1$ vertices, as defined above.  The amortized time of these insertions is $\OO(k+ |S_k|-1)$.

To bound the time of the $k-1$ heap deletions, let $G(k)$ be the subgraph induced by the $k$ smallest vertices: $G(k)$ contains these vertices and all arcs of $G$ with both ends among the $k$ smallest vertices.  Let $T(k)$ be the number of topological orders of $G(k)$.  Then $T(k) \leq T_k$, since every topological order of $G(k)$ is a possible output of the top-$k$ DAG sorting algorithm when run on $G$ to find the $k$ smallest vertices.  The heap deletions done by this algorithm are exactly the same as the first $k-1$ deletions if topological heapsort is run on $G(k)$, although the former algorithm may do additional insertions. 
If the heap has the strict working-set bound, then these additionally inserted vertices are never deleted, and hence do not count in the bound.  It follows that the strict working-set bound for the heap deletions done by the top-$k$ DAG sorting algorithm is identical to the working-set bound for the heap deletions done if topological heapsort is run to completion on $G(k)$.  
By the results of \cref{S:topological-heapsort}, the total amortized time of the heap deletions done by the top-$k$ DAG sorting algorithm is $\OO(k+\log T(k))=\OO(k+\log T_k)$.  Adding the bounds for the insertions and deletions and using the inequality $\OPT(k) \geq \max\{\log T_k, |S_k|-1\}$, we find that the algorithm does $\OO(k+\OPT(k))$ comparisons.  Its running time (excluding the initialization time if initialization is needed) is $\OO(k+ m_k + \OPT(k))$.  The proof of \cref{L:n-small} extends in a straightforward way to show that $k=\OO(m_k+\log T_k)$, so the running time bound is $\OO(m_k + \OPT(k))$.

In the special case of no pre-existing comparisons, topological heapsort merely inserts all $n$ vertices into the heap and then does $k$ deletions.  If the heap has the strict working-set bound, the running time is $\OO(n+ k\log k)$, optimum in both time and comparisons.  This algorithm is arguably simpler than using selection and sorting.  We thank one of the anonymous referees for this observation.

\subsection{Top-\emph{k} DAG Sorting via Topological Heapsort with Lookahead}

We can eliminate the additive $k$ term in the number of comparisons needed for finding the smallest $k$ vertices by using topological heapsort with lookahead.   We assume as in~\cref{S:top-k-topological-heapsort} that we are given the initial set of sources, the initial in-degree of each vertex, and the set of arcs exiting each vertex. If not, we pre-compute this information from the set of pre-existing comparisons, which takes $\OO(m+n)=\OO(m+ \log T)$ time but no comparisons.  In addition, we assume we are given a list of the bottlenecks in increasing order by level, up to and including the bottleneck on the highest level $l_k$ such that levels $1$ through $l_k$ contain fewer than $k$ vertices in total.

If the required bottlenecks are not provided, we compute them as follows: Run a topological sort that chooses sources in non-decreasing order by level.  It suffices to use a queue containing the current sources, as in Knuth's version of topological sort.  At each step, delete the first vertex on the queue, say $v$, delete $v$ and all arcs $vw$ from the graph, and for each such arc $vw$ do the following updates: Set $\el(w) \gets \max\{\el(w), \el(v)+1\}$, decrement the in-degree of $w$, and add $w$ to the back of the queue if it is now a source.  At all times the queued vertices are on at most two levels, and all the lower-level vertices precede all the higher-level ones, as one can easily prove by induction.  It follows that the vertices are deleted both in a topological order and in non-decreasing order by level.  Stop just after deleting the $k^\mathrm{th}$ vertex, say $v$, from the queue.  (There is no need to delete $v$ and its outgoing arcs from the graph.)  The needed bottlenecks are those on levels less than that of $v$.   

We  mark the given (or computed) bottlenecks as in~\cref{S:topological-heapsort-lookahead}.  Specifically, unmark all these bottlenecks, and for each such bottleneck $v$ and each arc $vw$, unmark $w$.  Process the given bottlenecks in decreasing order by level.  To process a bottleneck $v$, if $v$ has an outgoing arc to an unmarked vertex $w$ that is not one of the given bottlenecks, mark $v$ and $w$.

Once the given bottlenecks are marked, we run topological heapsort with lookahead using a heap with the strict working-set bound and treating the set of given bottlenecks as the entire set of bottlenecks.  We run this algorithm until it has deleted $k-1$ vertices from the graph, and then do a find-min on the heap to find the $k^\mathrm{th}$ smallest vertex.

This algorithm is clearly correct.  If the required bottlenecks are not provided as part of the input, the time to compute them is at most linear in the number of sources of $G$ plus the number of arcs out of the first $k-1$ vertices deleted from the queue.  We denote the number of such arcs by $m'_k$. The value of $m'_k$ can be larger or smaller than $m_k$, the number of arcs out of the \emph{smallest} $k-1$ vertices.  In any case, finding the needed bottlenecks takes $\OO(k+m'_k+|S_k|)=\OO(m'_k+\OPT(k))$ time and no comparisons.  

Let $v$ be the highest-level bottleneck among the given ones, and suppose it is the $j$-th smallest vertex.  Until the algorithm deletes $v$ from $G$, only vertices on levels no greater than the level of $v$ are sources.  Thus $j < k$.  This implies that all the given bottlenecks are among the smallest $k-1$ vertices.  It also implies that the time to mark the given bottlenecks is $\OO(m_k)$.  Also, because the heap has the strict working-set bound, until the algorithm deletes $v$ from $G$ it does the same computations and has the same bound as the algorithm in~\cref{S:topological-heapsort-lookahead} run to completion on the subgraph induced by the $j$ smallest vertices.  Each topological order on this subgraph is a prefix of a possible solution to the top-$k$ DAG sorting problem.  Hence the time for this part of the computation is $\OO(m_k + \OPT(k))$, and the number of comparisons is $\OO(\OPT(k))$.  The remainder of the computation runs the algorithm of~\cref{S:top-k-topological-heapsort}, since all the given bottlenecks have been deleted.  By the results of~\cref{S:top-k-topological-heapsort}, this part of the computation takes $\OO(m_k + \OPT(k))$ time and $\OO(k-j+\OPT(k))$ comparisons.  Let $S$ be the set of the first $k$ vertices in non-decreasing order by level, with ties broken arbitrarily.  Since there are at least $k-j-1$ non-bottlenecks among the vertices in $S$, and since any topological order of the subgraph induced by the vertices in $S$ is a possible solution to the top-$k$ DAG sorting problem, $k-j = \OO(\OPT(k))$ by~\cref{L:many-bottlenecks}.  We conclude that the algorithm runs in $\OO(m_k +\OPT(k))$ time and does $\OO(\OPT(k))$ comparisons.  If the needed bottlenecks must be computed, $\OO(m'_k)$ additional time is required for this purpose.

This gives us~\cref{thm:main_intro_2}: Topological heapsort with lookahead finds the smallest $k$ vertices of a DAG $G$ in sorted order in $\OO(m+\OPT(k))$ time and $\OO(\OPT(k))$ comparisons.

\section{Acknowledgments}
BH, RH, VR, and JT were partially funded by the Ministry of Education and Science of Bulgaria's support for INSAIT as part of the Bulgarian National Roadmap for Research Infrastructure. 
BH and RH were partially funded through the European Research Council (ERC) under the European Union's Horizon 2020 research and innovation program (ERC grant agreement 949272).
VR was partially funded through the European Research Council (ERC) under the European Union's Horizon 2020 research and innovation program (ERC grant agreement 853109).
RH and JT were supported by the VILLUM Foundation grant 54451. 
Part of this work was done while JT was working at and RH was visiting BARC at the University of Copenhagen. RH would like to thank Rasmus Pagh for hosting him there. Part of this work was done while JT was working at INSAIT.
JI was supported by the Fonds de la Recherche Scientifique-FNRS.
RT's research at Princeton was partially supported by a gift from Microsoft.  Part of this work was done during RT's visits to INSAIT and to the Simons Institute for the Theory of Computing.

\printbibliography

\appendix
\newpage
\section{Sampling and Counting Topological Orders}
\label{sec:sampling}
Given a DAG $G$ and its corresponding number of topological orders $T$, our algorithm yields a simple way of estimating the value of $\log T$ to within a constant factor. The idea is that the DAG sorting problem with an unknown total order selected uniformly at random takes $\Omega(\log T)$ comparisons with high probability.  Thus if the algorithm is run on one sample selected uniformly at random, we obtain a good approximation of $\log T$.

\begin{theorem}
Let $G$ be a directed acyclic graph with $T$ topological orders. Assume that there is an algorithm that returns a topological order $\OO(1)$-pointwise close to uniform\footnote{We say that a distribution $p$ is $c$-pointwise close to $q$ if for every element $x$ we have $q_x / c \le p_x \le c\cdot q_x$. } in time $t_{sample}$. 
Then there is an algorithm that runs in time $\OO(t_{sample} + \log T)$, performs  $\OO(\log T)$ comparisons
, and returns a constant-factor approximation of the value of $\log T$ with error probability $\OO(1/T^{0.9})$.
\end{theorem}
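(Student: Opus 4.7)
The plan is to use one sample together with our sorting algorithm as the estimator: sample a topological order $\pi$ using the given oracle in time $t_{sample}$, then run topological heapsort with lookahead treating $\pi$ as the ground truth answering the comparison queries, count the number $Q$ of comparisons performed, and output $Q/c$ for a fixed constant $c$ as the approximation of $\log T$.

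The upper bound $Q \le C\log T$ for some constant $C$ holds deterministically on every input by~\cref{thm:main_intro_1}. The heart of the proof is the complementary lower bound showing that on a random input, $Q$ is also $\Omega(\log T)$ with high probability. For this I would use the standard decision-tree argument. Topological heapsort with lookahead is deterministic, so its behaviour as a function of the answers to comparisons is described by a binary decision tree, each leaf of which outputs a specific permutation of the vertices. Since the algorithm is correct, each of the $T$ topological orders of $G$ reaches a distinct leaf of this tree. At most $2^d$ leaves have depth $\le d$, so at most $2^{0.1\log T}=T^{0.1}$ topological orders cause the algorithm to halt after fewer than $0.1\log T$ comparisons. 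Therefore, if $\pi$ were exactly uniform among the $T$ topological orders, $\Pr[Q < 0.1\log T] \le T^{0.1}/T = T^{-0.9}$.

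Next I would transfer this bound to the near-uniform distribution. If $p$ denotes the distribution the oracle produces and $q$ denotes the uniform distribution on topological orders, $\OO(1)$-pointwise closeness means $p_x \le \OO(1)\cdot q_x$ for every order $x$, and summing over the set of bad orders (those for which $Q<0.1\log T$) inflates the probability by at most a constant factor, giving $\Pr[Q<0.1\log T] = \OO(T^{-0.9})$. Combining with the upper bound, with probability $1-\OO(T^{-0.9})$ we have $0.1\log T \le Q \le C\log T$, so scaling $Q$ by a constant yields a constant-factor approximation of $\log T$. The running time is $t_{sample}$ for sampling plus $\OO(m+\log T)$ for the sort; since producing a sample requires reading the input DAG we have $t_{sample}=\Omega(m)$ and the total simplifies to $\OO(t_{sample}+\log T)$.

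The main obstacle is the decision-tree lower bound: the claim is classical, but some care is needed to state it correctly as a statement about a deterministic algorithm run on a random input drawn from (a distribution close to) the uniform distribution over topological orders of the fixed DAG $G$. One must also keep track of how the pointwise-closeness constant propagates through to the final failure probability, which is what dictates the choice of $0.1$ in the threshold $0.1\log T$; any constant $c<1$ would work, at the expense of a correspondingly weaker exponent in the error probability.
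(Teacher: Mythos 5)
Your proposal is correct and uses the same estimator as the paper: sample one topological order, run topological heapsort with lookahead on it, and report (a constant times) the number of comparisons, with the deterministic upper bound coming from \cref{thm:main_intro_1} and the running time handled identically (absorbing $m$ into $t_{sample}$). Where you differ is in how the high-probability lower bound is proved. The paper conditions on the bad event $\mathcal{E}=\{X\le \tfrac{1}{10}\log(T/c)\}$, shows each order has conditional probability at most $(c/T)/\Pr[\mathcal{E}]$, hence $H(\mathcal{O}\mid\mathcal{E})\ge\log(T\Pr[\mathcal{E}]/c)$, and then invokes Shannon's source coding theorem via $E[X\mid\mathcal{E}]\ge H(\mathcal{O}\mid\mathcal{E})$ to solve for $\Pr[\mathcal{E}]\le(c/T)^{9/10}$. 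You instead argue directly on the decision tree of the deterministic algorithm: correctness forces distinct topological orders to distinct leaves, at most $2^{0.1\log T}=T^{0.1}$ leaves can have depth below $0.1\log T$ (Kraft/prefix-freeness), so under the uniform distribution a short run has probability at most $T^{-0.9}$, and pointwise closeness inflates this by only a constant. The two arguments encode the same information-theoretic fact, but yours is more elementary (leaf counting, no conditional entropy or source-coding inequality), while the paper's folds the closeness constant $c$ into the comparison threshold rather than into the final probability; both yield error probability $\OO(T^{-0.9})$ and a constant-factor approximation of $\log T$.
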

\begin{proof}
The algorithm merely samples a topological order, runs topological heapsort with lookahead on the sample, and returns the number of comparisons made by the algorithm.  The running time of the algorithm is $\OO(n+m+\log T + t_{sample})$. Since in order to sample the order, we have to read the whole input DAG, this is equal to the desired $\OO(t_{sample} + \log T)$. The number of comparisons it does is $\OO(\log T)$ (note that we do not use any comparisons when generating the sample).

The number of comparisons $X$ done by topological heapsort with lookahead is $X = \OO(\log T)$ by \cref{thm:main_intro_2}.  Thus our algorithm returns an upper bound on $\log T$ that is at most a constant factor larger than the true value. 

It remains to give a similar lower bound.
We prove that with high probability $X = \Omega(\log T)$. Suppose $\mathcal O$ is a sample $c$-pointwise close to uniform. Consider the event $\mathcal{E}$ that $X \leq \frac{\log (T/c)}{10}$. For any topological order $\mathcal O'$, we have
\[
P[\mathcal O = \mathcal O' \mid \mathcal{E}] = \frac{P[\mathcal O=\mathcal O' \;\cap\; \mathcal{E}]}{P[\mathcal{E}]} \leq\frac{P[\mathcal O=\mathcal O']}{P[\mathcal{E}]} \leq \frac{c/T}{P[\mathcal{E}]}. 
\]
The conditional entropy of $\mathcal O$ is then
\[
H(\mathcal O \mid \mathcal{E}) =\sum_{\mathcal O' \in \mathcal{E}} P[\mathcal O=\mathcal O' \mid \mathcal{E}] \cdot \log \frac{1}{P[\mathcal O=\mathcal O' \mid \mathcal{E}]} \geq \log \left(\frac{T \cdot P[\mathcal{E}]}{c}\right).
\]

The comparisons performed by topological heapsort with lookahead uniquely determine the topological order~$\mathcal O$. Thus by Shannon's source coding theorem for symbol codes \cite{shannon1948mathematical}, we have $E[X \mid \mathcal{E}] \geq H(\mathcal O \mid \mathcal{E})$, and we can write
\[
\frac{\log(T/c)}{10} \geq E[X \mid \mathcal{E}] \geq H(\mathcal O \mid \mathcal{E}) \geq \log \left(\frac{T \cdot P[\mathcal{E}]}{c}\right) \,.
\]
Solving for $P[\mathcal{E}]$, we get
\[
P[\mathcal{E}] \leq \left(\frac{c}{T}\right)^{9/10} ,
\]
which concludes the proof.
\end{proof}

\section{Double Pairing Heaps}\label{S:double-pairing-heaps}
In this section, we describe pairing heaps and double pairing heaps and prove that double pairing heaps have the strict intermediate working-set bound, which implies that they have the strict working-set bound.  Thus the double pairing heap is a suitable heap implementation to use in our algorithms for the top-$k$ DAG sorting problem.  Our terminology for heaps is similar to that of \citet{sinnamon2023efficiency}.  We ignore the $\Decreasekey$ operation since our algorithms do not use it.

We do not know whether pairing heaps themselves, or in general any heap implementation other than double pairing heaps, has the strict working-set bound.  We leave open the exploration of this topic.  In our analysis of double pairing heaps, we indicate where a similar analysis breaks down for pairing heaps. 

\subsection{Design of Double Pairing Heaps}

We represent a heap by a rooted tree whose nodes are the heap items.  The tree is heap-ordered: If a node $x$ is the parent of a node $y$, the key of $x$ is no greater than the key of $y$.  Access to the heap is via the tree root, making $\Findmin$ an $\OO(1)$-time and zero-comparison operation.

We manipulate heap-ordered trees by doing \emph{links} and \emph{cuts}.  A link of the roots of two item-disjoint trees combines the two trees into one by making the root of the larger key a child of the root of the smaller key, breaking a tie arbitrarily.  The root of the new tree is the \emph{winner} of the link; the new child is the \emph{loser} of the link.  We use the term ``link'' to refer both to the operation of doing a link and to the resulting parent-child pair.  We denote by $xy$ a link won by $x$ and lost by $y$.

We maintain the children of a node in a list ordered by linking time, latest first (leftmost).  Thus the loser of a link becomes the new first (leftmost) child of its new parent.  We represent a heap using two pointers per node, to its first child and to its next sibling.

A cut of a link $xy$ breaks the link, thereby breaking the tree containing $x$ and $y$ into two trees, one rooted at $y$ containing all descendants of $y$, and the other rooted at the old root, containing all non-descendants of $y$ (including $x$).

To insert a new item $x$ into a heap, we make $x$ into a one-node heap and link its root, $x$, with the root of the existing heap; if the existing heap is empty, the heap rooted at $x$ replaces the empty heap.  To do a $\Deletemin$, we cut the links between the heap root, say $x$, and its children.  This makes the list of children of $x$ into a list of roots.  We repeatedly link pairs of these roots until only one root remains.  If the initial heap contained only $x$, the new heap is empty.

The only flexibility in this implementation is in the choice of which links to do during a $\Deletemin$. A \emph{pairing heap} does these links in two passes.  The first pass, the \emph{pairing pass}, links the roots on the list in consecutive pairs first-to-last, the first with the second, the third with the fourth, and so on, leaving one root unlinked if the original number is odd.  The second pass, the \emph{assembly pass}, repeatedly links the last root on the list with the next-to-last root.  When a pairing or assembly link is done, the winner retains its current position on the list of roots; the loser is no longer a root and is deleted from the list.  A \emph{double pairing heap} does two left-to-right pairing passes (instead of just one) followed by a right-to-left assembly pass.

\subsection{Analysis of Double Pairing Heaps}

To analyze double pairing heaps we need some additional terminology.  Consider an arbitrary sequence of intermixed inserts and delete-mins on an initially empty heap.  We call a node \emph{temporary} or \emph{permanent}, respectively, if it is eventually deleted by a delete-min or not.  After the last operation, the heap contains all the permanent nodes and none of the temporary nodes.

We call a link \emph{real} if it is between two temporary nodes and \emph{phantom} otherwise.  A link between a temporary node and a permanent node cannot be won by the permanent node, because the temporary node is eventually deleted, which cannot happen until its parent is deleted, which in turn cannot happen if the parent is permanent.  It follows that every phantom link is lost by a permanent node.

We call a link an \emph{insertion}, \emph{pairing}, or \emph{assembly} link if it is done by an insertion, a pairing pass, or an assembly pass, respectively.  Omitting the $\Findmin$ operations (each of which takes $\OO(1)$ time and no comparisons), the time for a sequence of heap operations starting with an empty heap is bounded by a constant times one plus the number of links.  Thus it suffices to bound the number of links.  Indeed, it is enough to bound the number of real second-pass pairing links, as we now show:

\begin{lemma}\label{L:double-link-bound}
The number of links done in a sequence of double pairing heap operations starting with an empty heap is at most five per insertion and four per real pairing link done in the second pairing pass of a delete-min.
\end{lemma}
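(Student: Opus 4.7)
The plan is to combine a per-delete-min combinatorial bound with an amortized accounting argument.

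First, for any delete-min on a root with $d$ children, the three passes perform exactly $P_1 + P_2 + A = d-1$ links, where $P_1 = \lfloor d/2 \rfloor$, $P_2 = \lfloor \lceil d/2\rceil/2\rfloor$, and $A = \lceil \lceil d/2\rceil/2\rceil - 1$. Straightforward arithmetic (or a case analysis on $d \bmod 4$) gives $P_1 \le 2P_2 + 1$ (each second-pass link consumes two first-pass outputs, up to the parity leftover) and $A \le P_2$ (assembly reduces the post-$P_2$ list of size $P_2$ or $P_2+1$ to a single root). Consequently $P_1 + P_2 + A \le 4P_2 + 1$ per delete-min. Writing $P_2 = R_2(i) + \pi_2(i)$ where $\pi_2(i)$ counts the \emph{phantom} second-pass links of delete-min $i$ (those with a permanent loser, by the observation just before this lemma), I get $P_1 + P_2 + A \le 4R_2(i) + 4\pi_2(i) + 1$.

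Summing and adding the at most $I$ insertion links, the total link count is at most $I + 4R_2 + 4\sum_i \pi_2(i) + D$, where $D$ is the number of delete-mins. For the claimed bound $5I + 4R_2$ it therefore suffices to prove the inequality $\sum_i \pi_2(i) + D/4 \le I$. I would establish this by a potential-function argument: set $\Phi$ equal to $4$ times the number of temporary nodes in the heap, plus a correction term reflecting permanent nodes whose parent is currently temporary. Each insertion changes $\Phi$ by at most $4$ (so amortized insertion cost is at most $5$). Each delete-min removes one temporary (the deleted root) and may convert some permanent children to having permanent or no parent, and the chosen correction term is designed so that $\Phi$ drops by at least $4\pi_2(i) + 1$; then the amortized cost of the delete-min is at most $4R_2(i)$. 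Telescoping these amortized bounds and using $\Phi \ge 0$ throughout yields the desired global inequality.

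The main obstacle is calibrating the correction term in $\Phi$ so that phantom $P_2$ losers of \emph{every} type are paid for, including those whose new parent is temporary (and hence do not change the count of permanents-with-temporary-parent). A possible fallback is a direct charging scheme: each phantom $P_2$ event is associated with the permanent-permanent $P_1$ pair or permanent $P_1$ odd-one-out that produced its loser's entry into the second pass, and each such ancestor event is in turn charged to a distinct insertion using the fact that a permanent node can only become a temporary-parented child via a link whose existence is traceable back to an insertion of either that permanent node or the temporary involved. Verifying that this charging is injective across the whole sequence — especially when permanent nodes cycle through many temporary parents — is the technical core of the argument.
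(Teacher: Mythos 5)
Your per--delete-min counting is fine (with $d$ new roots the total is $d-1\le 4P_2+1$), but the reduction to the global inequality $\sum_i \pi_2(i) + D/4 \le I$ is where the argument breaks: that inequality is false, so neither the potential function nor the fallback charging scheme can be calibrated to work at these constants. Concretely, take two permanent nodes $p_1,p_2$ with huge keys and maintain the following cycle (reachable from the empty heap with $\OO(1)$ startup operations). Invariant: the heap is a single tree whose root is a temporary node $r$ with child list $[p_a,p_b]$, $p_b$ rightmost, where $\{p_a,p_b\}=\{p_1,p_2\}$. Insert one temporary $s$ with $r<s<\min(p_1,p_2)$, so the child list of $r$ becomes $[s,p_a,p_b]$; then delete-min. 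The three new roots are $[s,p_a,p_b]$: the first pass links $s$ with $p_a$ ($s$ wins, a phantom link) and leaves $p_b$ as the unpaired leftover; the second pass links $s$ with $p_b$, a \emph{phantom} second-pass pairing link; the assembly pass does nothing, and the new root is $s$ with child list $[p_b,p_a]$, restoring the invariant with the roles of $p_1,p_2$ swapped. Over $N$ cycles this sequence has $I\approx N$ insertions, $D\approx N$ delete-mins, $\sum_i\pi_2(i)\approx N$ phantom second-pass links and no real ones, so $\sum_i\pi_2(i)+D/4\approx \tfrac{5}{4}N> I$. (The lemma itself is safe here: only $3$ links are done per cycle, well within the $5$-per-insertion budget; it is your intermediate bound $I+4R_2+4\sum_i\pi_2(i)+D\approx 6N$ that overshoots the target $5I+4R_2\approx 5N$.) The failure mode is exactly the one you flagged as the technical core: a permanent node can enter the second pass as the first-pass odd-one-out and lose a phantom second-pass link in \emph{every} delete-min, cycling through temporary parents forever, so these events cannot be charged injectively to insertions at the rate you need.

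The paper's proof never pays per phantom second-pass link at all; it uses a different decomposition. The new roots, excluding the last one, are grouped into consecutive blocks of four, and each block's four links are charged either to the block's second-pass pairing link if that link is real, or, if it is phantom, to a permanent--permanent first-pass link inside the block. Such a link must exist, because within a full block a permanent node can reach the second pass only by winning its first-pass link, and permanents never beat temporaries; moreover a permanent--permanent link is never cut, so its permanent loser is retired and its insertion is charged at most once over the entire sequence. The unpaired last root, which is the source of your trouble, never belongs to a block: the at most three links among ungrouped roots are charged to the insertion of the (temporary) deleted root, again at most once per insertion. If you want to salvage your route, you must separate the odd-one-out phantom second-pass links from those arising via permanent--permanent first-pass wins, at which point you essentially reconstruct the paper's block argument.
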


\begin{proof}
There is at most one insertion link per insertion.  Consider the links done during a delete-min, say of a node $u$.  Deletion of the original root converts the list of children of this root into a list of new roots.  Group the roots on this list, not including the last (rightmost) root, into consecutive sets of four.  Associate four links with each group: The two first-pass and one second-pass pairing link in which its nodes participate, and the first assembly link in which the winner of the second-pass pairing link participates.  At most four new roots are not in a group. There are at most three links between the nodes not in a group.  All other links are associated with a group.  We charge the links not associated with a group to the insertion of $u$, which is a temporary node.  A~temporary node is charged in this way at most once.  We charge the links associated with a group in one of two ways.  If the second-pass assembly link of a group is real, we charge all four links of the group to this link.  If not, at least one of the two first-pass pairing links must be between two permanent nodes.  Such a link is never cut.  In this case, we charge all four links of the group to the insertion of the permanent node that lost the first-pass pairing link to another permanent node.  A~permanent node is charged in this way at most once.  Adding up the charges gives the lemma.
\end{proof}

An analog of Lemma~\cref{L:double-link-bound} holds for pairing heaps~\cite{SinnamonT-pairing}, but the bound is weaker: The total number of links is at most a constant times the number of insertions plus the number of real pairing links plus the number of real assembly links.  We do not know how to get a bound on the number of real assembly links small enough to prove that pairing heaps have the strict working-set bound. 

To bound the number of real second-pass pairing links, we use a standard potential-based argument~\cite{Tarjan85}.  We assign to each state of the data structure a real-valued \emph{potential}.  We define the \emph{amortized cost} of an operation to be its actual cost plus the net increase in potential caused by the effect of the operation on the state of the data structure.  Given a sequence of operations, the sum of their amortized costs equals the sum of their actual costs plus the final potential minus the initial potential, since the net increases in potential form a telescoping sum.   In our application the initial and final potentials are zero, so the sum of the amortized costs equals the sum of the actual costs.

We define the actual cost of an insertion to be zero, and the actual cost of a delete-min to be the number of real second-pass pairing links it does.  We define the potential in a way similar to that used by \citet{iacono-pairing-heaps} to prove that pairing heaps have the intermediate working-set bound if they end up empty, which is in turn based on the potential used to prove that splay trees have a (differently defined) working-set bound~\cite{splay-trees}.  Recall from~\cref{S:heaps-working-set} that the \emph{strict working set} of a temporary node $x$ is the set of temporary nodes in the heap that were inserted at or after the time $x$ was inserted and before $x$ was deleted.  (That is, the strict working set of $x$ includes $x$.)  The \emph{strict working-set size} of $x$ is the size of its strict working set.  The \emph{strict intermediate working-set size} $w(x)$ of $x$ is the maximum number of nodes in its strict working set that are in the heap at the same time, a time between the times just after $x$ is inserted and just before $x$ is deleted, inclusive. 

We define the \emph{weight} of a temporary node $x$ to be $1/w(x)^2$.  Each permanent node has a weight of zero.  We define the \emph{size} of a node $x$ to be the sum of the weights of all descendants of $x$, including $x$.  It follows from this definition that the size of a node is the sum of its weight and the sizes of all its children.  If $x$ is a child, we define the \emph{mass} $m(x)$ of $x$ to be the size of its parent just after $x$ lost its link with this parent.  Equivalently, this is the weight of the parent plus the sum of the sizes of $x$ and of all its siblings after it on the list of its siblings.  Roots have no mass except in the middle of a delete-min.  If $x$ is a root in the middle of a delete-min, $m(x)$ is the weight of the just-removed root plus the sum of the sizes of $x$ and of all roots after $x$ on the root list.  Finally, we define the \emph{potential} $\Phi(x)$ of a temporary node $x$ to be $(\log m(x))/2$, and the potential of a heap to be the sum of the potentials of its temporary nodes.

Given an arbitrary sequence of heap operations on an initially empty heap, the initial and final potentials are zero, the latter because all temporary nodes have been deleted.  Hence the sum of the amortized costs of the operations equals the total number of real second-pass pairing links.

In the rest of the proof, we use $c = \sum_{i=1}^\infty{1/i^2}$. Note that $c$ is a well-defined constant since the sum on the right-hand side converges. 

\begin{lemma}\label{L:double-size-bound}
At all times, the size and mass of any heap node is at most $c$.
\end{lemma}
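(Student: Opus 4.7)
The plan is to reduce both statements to a single invariant: at every moment in the execution, the total weight of all temporary nodes currently residing in the heap is at most $c$. Once this invariant is established, both bounds follow quickly. The size of any node is a sum of weights over nodes in its subtree, all of which are currently in the heap, so it is a subset sum of the total heap weight. For the mass of a child $x$, $m(x)$ is the size of $x$'s parent immediately after $x$ lost its link with that parent; at that instant the parent and all its descendants (including $x$) are in the heap, so again it is a subset sum and the invariant applies. For a root $x$ in the middle of a delete-min, $m(x)$ is the weight of the just-removed root plus the sizes of $x$ and all roots after $x$ on the current root list; these weights range over disjoint subsets of the nodes that were in the heap at the instant immediately before the delete-min began, and since no insertions happen during a delete-min the invariant applied at that instant still bounds them.

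To prove the invariant I would fix an arbitrary time $t$ and list the temporary nodes currently in the heap in order of insertion as $x_1, x_2, \ldots, x_k$ (earliest first). For each index $i$, the nodes $x_i, x_{i+1}, \ldots, x_k$ all lie in the strict intermediate working set of $x_i$: every $x_j$ with $j \ge i$ was inserted no earlier than $x_i$, and since $x_i$ is still present at time $t$, none of them has been deleted after $x_i$'s insertion. All $k - i + 1$ of these nodes are simultaneously present at time $t$, which lies between $x_i$'s insertion and deletion, so $w(x_i) \ge k - i + 1$. Summing the weights gives
\[
\sum_{i=1}^{k} \frac{1}{w(x_i)^2} \;\le\; \sum_{i=1}^{k} \frac{1}{(k-i+1)^2} \;=\; \sum_{j=1}^{k} \frac{1}{j^2} \;\le\; c.
\]

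The main care point is the mass case during a delete-min, where the heap is transiently a forest and the definition of $m$ for roots folds in the weight of the already-removed root. The clean fix is to evaluate the invariant at the time instant just before the delete-min began: the just-removed root and the nodes currently on the root list together form a subset of the temporary nodes in the heap at that moment, so their combined weight is at most $c$. Since no insertions occur during a delete-min, this bound carries over to every intermediate configuration of the pairing and assembly passes, handling every subcase of the mass definition uniformly.
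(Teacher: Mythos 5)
Your proof is correct and follows essentially the same argument as the paper: order the temporary nodes currently in the heap by insertion time and observe that the $i$-th most recently inserted one has strict intermediate working-set size at least $i$, so the total weight is bounded by $\sum_{i\ge 1} 1/i^2 = c$. The paper's proof is just this one-line observation (stated with the reverse indexing, $w(x_i)\ge i$) and leaves the size and mass cases, including the transient root-list configuration during a delete-min, as immediate consequences; your additional care in checking those cases is a sound filling-in of the same argument, not a different route.
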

\begin{proof}
Let $x_1, x_2,\dots, x_k$ be the temporary nodes in a heap at some time, ordered in decreasing order by insertion time ($x_1$ last, $x_k$ first).  For each $i$, we have $w(i) \geq i$.  The lemma follows.   
\end{proof}

\begin{lemma}\label{L:double-amortized-cost}
In a double pairing heap, the amortized cost of an insertion is $\OO(1)$, and the amortized cost of a delete-min of item $u$ is $\OO(1+\log w(u))$. 
\end{lemma}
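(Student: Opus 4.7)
The plan is to verify the two amortized bounds via the potential $\Phi=\sum_{x\text{ temporary}}\tfrac12\log m(x)$ defined just before the lemma, so that amortized cost equals actual cost plus $\Delta\Phi$. Roots contribute nothing outside of a delete-min, and permanent nodes never contribute because their weight is $0$. I would first establish a convenient per-link bookkeeping rule: if an adjacent pair $(a,b)$ on the current root list is linked, with $a$ the left element, then the winner's new mass equals the old $m(a)$ irrespective of which of $a,b$ actually wins, while the loser's new mass equals $\operatorname{size}(a)+\operatorname{size}(b)$; furthermore no other mass in the heap changes, because children on unrelated subtrees retain their weight-of-parent, their size, and their list of later-linked siblings. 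The same bookkeeping rule also governs the single link done by an insertion.

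\textbf{Insertion.} Applying the rule to an insertion of $x$, the only mass that can change is that of the loser of its single link with the existing root, and the loser's new value is the new root's size, which by \cref{L:double-size-bound} is at most $c$. Hence $\Delta\Phi\le(\log c)/2=O(1)$, and since the actual cost of an insertion is $0$, its amortized cost is $O(1)$.

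\textbf{Delete-min.} I would split the deletion of $u$ into four phases: removing $u$, a first pairing pass, a second pairing pass, and an assembly pass. The first of these leaves $\Phi$ unchanged, because the formula $\operatorname{wt}(u)+\operatorname{size}(x_i)+\sum_{j>i}\operatorname{size}(x_j)$ applies to $x_i$ both as a child of $u$ and as a mid-delete-min root. Summing the per-link formula across each of the remaining passes and using concavity of $\log$ in the standard ``Iacono-style'' way, I would bound the total $\Delta\Phi$ by something of the form $\tfrac12\log(1/\operatorname{wt}(u))+O(1)=\log w(u)+O(1)$ minus the number of real second-pass pairing links. This telescoping uses crucially that the masses $m(r_i)$ of the initial root list form a monotone tail sum anchored at $\operatorname{wt}(u)$, that the first pass doubles the new subtree sizes, and that the second pass doubles them again, so that the log of the new sizes against the previous masses is sufficiently negative on average. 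Adding the actual cost and rearranging yields the claimed amortized bound $O(1+\log w(u))$.

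\textbf{Main obstacle.} The delicate step is the treatment of phantom links, i.e., those that involve at least one permanent node. These links do not contribute to the actual cost (by \cref{L:double-link-bound} only real second-pass pairing links are charged), yet they can produce a \emph{positive} per-link $\Delta\Phi$: in particular, when the left element of a pair is permanent and the right is temporary, the rule above promotes the temporary winner's mass to the larger value old $m(a)$ attached to the permanent left sibling. A single pairing pass cannot absorb these positive contributions, and this is precisely why two pairing passes are required: after the first pass each permanent node sits one level deeper, so on the second pass the root list is dominated by temporary nodes whose aggregated sizes produce enough negative potential change to pay both for the real second-pass links and for any stranded positive terms from the first pass. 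Verifying this pair by pair, and combining it with the analogous and simpler accounting for the right-to-left assembly pass, is the technical heart of the proof.
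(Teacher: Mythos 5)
Your setup is faithful to the paper: the same potential, the correct per-link mass bookkeeping (winner keeps the old $m(a)$ of the left element, loser's mass freezes at the combined size), the insertion bound via \cref{L:double-size-bound}, and the observation that detaching $u$'s children leaves $\Phi$ unchanged. The gap is that the heart of the delete-min bound --- showing that each \emph{real second-pass pairing link} is paid for by a potential drop --- is exactly the step you leave as ``concavity in the standard Iacono-style way,'' and the mechanism you offer (``the first pass doubles the new subtree sizes, the second pass doubles them again'') is not a proof and is not even about the right quantity: the potential depends on \emph{masses}, not sizes; sizes need not double (permanent subtrees have size $0$, and the weight $1/w(u)^2$ terms can be negligible); and, as the paper's own treatment of the \emph{first} pairing pass shows, a pairing pass analyzed pair-by-pair can \emph{increase} the potential by $\Theta(\log w(u))$, with no $-1$ per link emerging. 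The paper extracts the needed $-1$ per real link only by interleaving the second pairing pass with the assembly pass, reordering the links into splay-style zig/zig-zig steps, where $m'(z')\ge m(z)+m'(x')$ combined with $\log(a+b)-1\ge(\log a+\log b)/2$ gives amortized cost at most $3(\Phi'(z')-\Phi(z))$ per step, telescoping to $\OO(1+\log w(u))$, plus a final $2\log w(u)$ correction when $u$'s weight leaves the mass of the surviving root. Your plan of handling the second pairing pass and the assembly pass ``separately and analogously'' does not obviously yield this per-link payment, so the central claim is unproven as written.

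Moreover, the ``main obstacle'' you single out is illusory. Links are cut only at a root being deleted, so a descendant of a permanent node can never become a root and is therefore itself permanent; hence every permanent node has size $0$, and in your own bookkeeping rule $m(a)=m(b)$ whenever the left element $a$ of a pair is permanent. A phantom link therefore never changes the potential (the paper notes this in one line), there are no ``stranded positive terms from the first pass'' to absorb, and the second pairing pass plays no role in the potential argument of this lemma. Its purpose lies in \cref{L:double-link-bound}: phantom links still cost time and comparisons, and the two-pass structure is what allows them to be charged to insertions via the never-cut permanent--permanent first-pass links. Your stated reason (``after the first pass each permanent node sits one level deeper'') is also incorrect, since a permanent node can win a first-pass link against another permanent node and remain on the root list.
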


\begin{proof}
An insertion has an actual cost of zero.  An insertion into an empty heap produces a heap with zero potential.  Consider an insertion of $x$ into a heap with root $y$.  If $x$ is permanent, its insertion does not change the potential of the heap.  Suppose $x$ is temporary.  If $y$ is permanent, $x$ wins the link with $y$, leaving the potential of the heap unchanged.  Suppose $y$ is temporary.  If $x$ wins the link with $y$, $y$ acquires a potential of at most $(\log c)/2$ by \cref{L:double-size-bound}.  If $x$ loses the link with $y$, $x$ acquires a potential of at most $(\log c)/2$ by \cref{L:double-size-bound}.  In each case, the insertion increases the potential by at most $(\log c)/2$.  Hence the amortized cost of the insertion is $\OO(1)$.

The analysis of deletion is considerably more complicated.  Consider a delete-min that deletes root~$u$.  First, we bound the change in potential caused by the first pairing pass.  Then we obtain a combined bound on the amortized cost of the second pairing pass plus that of the assembly pass.

If $x$ and $y$ are temporary nodes on the root list with $x$ preceding $y$, then $m(x) \geq m(y)$.  Consider a pair of nodes $x$ and $y$ linked during the first pairing pass, with $x$ preceding $y$ on the root list before the link.  If either $x$ or $y$ is permanent, the link does not change the potential.  Suppose $x$ and $y$ are both temporary.  Let unprimed and primed variables denote their values just before and just after the link, respectively.  If $x$ wins the link, $m'(x)= m(x)$ and $m'(y) \leq m(x)$.  If $y$ wins the link, $m'(y)=m(x)$ and $m'(x) \leq m(x)$.  In either case, the potential increase caused by the link is at most $\Phi(x) - \Phi(y)$.  It follows that the sum of the increases in potential caused by the first-pass pairing links is bounded by a telescoping sum that totals $\Phi(v')-\Phi(v)$, where $v'$ is the first (leftmost) temporary root on the list of roots and $v$ is the last (rightmost).  This difference is at most $(\log c - \log(1/w(u)^2))/2 = \OO(1+\log w(u))$, because the weight of the deleted root $u$ is included in the mass of $v$.

For the purpose of analyzing the second pairing pass and the assembly pass, we change the order of the links.  This does not affect the final tree and hence does not affect the total amortized cost of the two passes.  We do the pairing links in last-to-first order rather than first-to-last, and we do each assembly link as soon as both nodes to be linked have won their second-pass pairing link, if any.  Specifically, if the second pairing pass links the last two links on the root list, we begin by linking these roots.  This is a \emph{zig step}.  There is at most one zig step.  Then we repeat the following \emph{zig-zig} step until only one root remains: Let $x$, $y$, and $z$ be the last three roots on the root list, with $x$ preceding $y$ and $y$ preceding $z$.  Link $x$ and $y$ by a pairing link, and then link the winner with $z$ by an assembly link.  After these two links, the winner of the second link takes the place of $x$, $y$, and $z$ on the root list and becomes the last root on the root list.

We shall bound the amortized cost of each step by a difference of node potentials.  These form a telescoping sum whose total gives us the desired bound.  The first term in this sum has an extra ``$+1$'' that counts one link whose cost cannot be folded into the corresponding difference in potentials.  This analysis is exactly like that of splay trees~\cite{splay-trees}, made simpler by the lack of a zig-zag case but made more complicated by the effect of permanent nodes.

Phantom links have no cost and do not change the potential.  After a step in which a temporary node participates, the rightmost root is a temporary node, and this remains true subsequently.

In analyzing steps, we shall denote by $z$ and $z'$ and by $\Phi$ and $\Phi'$ the rightmost root and the potential function before and after the step, respectively.

Consider the first step in which a temporary node participates.  Suppose this is a zig step (the only zig step).  There are two cases. If the link is between a temporary node and a permanent node, then the link costs nothing and does not change the potential.  The resulting potential change is at most $0 \leq \Phi'(z') - \Phi(v)$, where $v$ is the rightmost temporary node before the link.  If the link is between two temporary nodes, the resulting potential change is at most $\Phi'(z') - \Phi(v)$, where $v$ is the rightmost temporary node before the link.  In either case the amortized cost of the step, including the cost of the link if it is real, is at most $3(\Phi'(z') - \Phi(v))+1$.

Suppose the first step in which a temporary node participates is a zig-zig step.  Let $x$, $y$, and $z$ be the three rightmost roots just before the step.  Suppose $z$ is permanent.  Then the assembly link in the step is phantom and does not change the potential.  The argument in the previous paragraph shows that the amortized cost of the step is at most $3(\Phi'(z') - \Phi(v))+1$.

The remaining possibility is that the first step in which a temporary node participates is a zig-zig step in which the rightmost root is temporary.  We treat this as an instance of the general case, which is \emph{any} zig-zig step in which the rightmost root $z$ is temporary.  Consider such a step.  Let $x$, $y$, and $z$ be the three rightmost roots before the step, with $y$ left of $z$ and $x$ left of $y$.  If $x$ or $y$ is permanent, the second-pass pairing link done in the step is phantom and does not count.  The amortized cost of the step is the increase in potential, which is at most $3(\Phi'(z') - \Phi(z))$ by an argument like those in the previous cases.  

The final possibility is that $x$, $y$, and $z$ are all temporary.  This is the heart of the argument.  We fold the cost of the real pairing link between $x$ and $y$ into the potential difference bounding the amortized cost of the step.  The analysis is essentially the same as that used to analyze the zig-zig case in splaying~\cite{splay-trees}, subsequently used in the original analysis of pairing heaps~\cite{pairing-heaps}.

Let $x'$ be the loser of the pairing link and $y'$ the loser of the assembly link.  Node $z'$, the rightmost root after the step, is the winner of the assembly link.  Nodes $x'$, $y'$, and $z'$ are some permutation of $x$, $y$, and $z$ such that $x'$ is either $x$ or $y$.  Let $m$ and $m'$ denote the node masses before and after the step, respectively.  Then $\Phi(x)=\Phi'(z') \geq \Phi'(y')$ and $\Phi(y) \geq \Phi(z)$.  Also, $\Phi'(x')+ \Phi(z) \leq 2\Phi'(z') - 1$. This follows from two inequalities: $m'(z') = s(x)+s(y)+m(z) \geq m(z) + m'(x')$, and 
$\log(a+b) - 1 \geq (\log a + \log b)/2$ for any positive $a$ and $b$.  These combine to give $\log m'(z') - 1 \geq (\log m(z) +\log m'(x'))/2$, which can be rewritten as $2\Phi'(z') - \Phi(z) - 1 \geq \Phi'(x')$.  The net increase in potential caused by the two links is thus 
\begin{align*}
&\Phi'(x')+\Phi'(y')+\Phi'(z')-\Phi(x)-\Phi(y)-\Phi(z)\\
&=\Phi'(x')+\Phi'(y')-\Phi(y)-\Phi(z) \\ 
	&\leq \Phi'(x') + \Phi'(z') - \Phi(z) - \Phi(z) && [\,\Phi'(y') \le \Phi'(z'), \Phi(y) \ge \Phi(z)\,]  \\
	&\leq 3\Phi'(z')-3\Phi(z) - 1 && [\,2\Phi'(z')-\Phi(z)- 1\geq \Phi'(x')]
\end{align*}
The amortized cost of the step, including the unit cost of the real pairing link, is at most $3(\Phi'(z')-\Phi(z))$.  Node $z'$ becomes node $z$ in the next step.

Summing the amortized cost bound over all steps, the sum telescopes and totals $3(\Phi'(v')-\Phi(v)) + 1$, where $v$ is the last temporary root on the root list just after deletion of the original root, $v'$ is the final remaining root, and $\Phi$ and $\Phi'$ are the node potentials before and after all the links, respectively.  Since the potential of $v'$ after the links is at most $(\log c)/2$, and the potential of $v$ before any of the links is at least $-2\log w(u)$, the amortized cost of the combined second pairing and assembly passes is $\OO(1+\log w(u))$.

After the assembly pass, the final root $v'$ includes in its mass the weight of the deleted root $u$.  After the deletion, $v'$ has no mass and hence no potential.  At the end of the assembly pass, the potential of $v'$ is at least $\log(1/w(u)^2)$.  Hence the increase in potential of $v'$ from the end of the assembly pass to the end of the deletion is at most $2\log(w(u))$.

Combining our bounds, we find that the amortized cost of deleting root $u$ is $\OO(1+ \log w(u))$.
\end{proof}

\cref{L:double-link-bound,L:double-amortized-cost} give us the following theorem:

\begin{theorem}
An arbitrary sequence of operations on an initially empty double pairing heap takes $\OO(1)$ worst-case time and no comparisons for each find-min, $\OO(1)$ amortized time and comparisons for each insertion, and $\OO(1 + \log w(u))$ amortized time and comparisons to do a delete-min that deletes item $u$ having strict intermediate working-set size $w(u)$.
\end{theorem}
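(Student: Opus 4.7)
The plan is to combine the two lemmas just proved, \cref{L:double-link-bound} and \cref{L:double-amortized-cost}, and observe that the stated bounds are an immediate bookkeeping consequence of both.  The find-min bound is trivial: a find-min merely returns the root of the heap-ordered tree, which takes constant worst-case time and does zero comparisons, so no amortization is needed for this operation.

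For insertions and delete-mins, I would first recall that, aside from find-mins, the running time of any sequence of double pairing heap operations (starting from an empty heap) is dominated, to within a constant factor, by the number of links performed, since every other bookkeeping action (allocating a node, splicing a child list, deleting a root) is $\OO(1)$ per link.  Comparisons, in turn, occur only during links.  So it suffices to bound the amortized number of links per operation.

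By \cref{L:double-link-bound}, the total number of links done in any such sequence is at most five per insertion plus four per real second-pass pairing link.  Define the amortized cost of each insertion to be the $\OO(1)$ bound given by \cref{L:double-amortized-cost} plus an additive $5$ to cover its own contribution to the link count, and the amortized cost of each delete-min of an item $u$ to be the $\OO(1+\log w(u))$ bound from \cref{L:double-amortized-cost} inflated by a factor of $4$ to absorb the per-real-pairing-link contribution.  Since \cref{L:double-amortized-cost} is itself an amortized bound (with respect to the potential $\Phi$ defined there), adding a constant multiple of the actual number of real second-pass pairing links keeps the bound amortized with respect to the same potential; the initial and final potentials are both zero, so the sum of these amortized costs over any prefix of the sequence is an upper bound on the total number of links, hence on the total time and comparisons.

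The main (and essentially only) obstacle has already been handled inside the two supporting lemmas: \cref{L:double-link-bound} required the charging scheme that routes phantom-link groups either to the first-pass permanent-versus-permanent link or to the insertion of $u$, and \cref{L:double-amortized-cost} required the splay-style telescoping argument for the second pairing and assembly passes together with the choice of weight $1/w(x)^2$ so that sizes remain bounded by $c=\sum_i 1/i^2$.  Given those, the theorem itself requires nothing beyond the linear combination of the two bounds sketched above.
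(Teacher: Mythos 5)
Your proposal matches the paper's own (very brief) argument: the theorem is stated there as an immediate consequence of \cref{L:double-link-bound,L:double-amortized-cost}, combined by exactly the bookkeeping you describe (links dominate time and comparisons, and the link count is charged to insertions and to real second-pass pairing links whose amortized cost the potential argument bounds). The only slight imprecision is your claim that the amortized sums bound the actual cost over \emph{any prefix}: since node potentials $(\log m(x))/2$ can be negative, the telescoping argument is only guaranteed to bound the cost of the full sequence (where the final potential is zero), which is all the theorem needs.
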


\section{An Alternative Algorithm}\label{S:alternative}
We briefly discuss the alternative algorithm by~\citet{vanderhoog2024}.
Their idea is as follows:  Find a longest path $P$.  Initialize a list $L$ to contain the vertices on $P$ in order.  Delete the vertices on $P$ from $G$ to form graph $G\setminus P$.  Insert the vertices not on $P$ into $L$ one-by-one in any topological order of $G\setminus P$.  To insert a vertex $v$ into $L$, do an exponential/binary search in $L$ starting from the largest vertex $u$ such that $uv$ is an arc, or from the beginning of $L$ if $v$ is an original source of $G\setminus P$.  

With appropriate data structures, this algorithm has the same asymptotic efficiency as topological heapsort with lookahead: It runs in $\OO(m+\log T)$ time and does $\OO(\log T)$ comparisons.  To support the exponential/binary searches in $L$, the authors represent $L$ by a homogeneous finger search tree~\cite{HuddlestonM82}.  In addition, they need a data structure to find the largest vertex $u$ such that $uv$ is an arc, for each vertex $v$ to be inserted into $L$ (without doing additional vertex comparisons).  This problem requires doing order queries in a list subject to insertions.  There are data structures for this problem that support order queries and insertions in $\OO(1)$ time~\cite{Tsakalidis83,DietzS87,BenderCDFZ02}.

In comparison, our algorithm uses just one array and a standard heap implementation, specifically a pairing heap.

A limitation of the algorithm of~\cite{vanderhoog2024} is that it does not find the vertices in increasing sorted order in a time bound that depends on the number so far found.  As a result, their approach does not seem to extend to the top-$k$ DAG sorting problem.
\end{document}